
\documentclass[journal,draftcls,onecolumn,12pt,twoside]{IEEEtranTCOM}
\linespread{1.6}
%\documentclass[journal,comsoc]{IEEEtran}
%
% If IEEEtran.cls has not been installed into the LaTeX system files,
% manually specify the path to it like:
% \documentclass[journal,comsoc]{../sty/IEEEtran}

\normalsize
\ifCLASSINFOpdf
  % \usepackage[pdftex]{graphicx}
  % declare the path(s) where your graphic files are
  % \graphicspath{{../pdf/}{../jpeg/}}
  % and their extensions so you won't have to specify these with
  % every instance of \includegraphics
  % \DeclareGraphicsExtensions{.pdf,.jpeg,.png}
\else
  % or other class option (dvipsone, dvipdf, if not using dvips). graphicx
  % will default to the driver specified in the system graphics.cfg if no
  % driver is specified.
  % \usepackage[dvips]{graphicx}
  % declare the path(s) where your graphic files are
  % \graphicspath{{../eps/}}
  % and their extensions so you won't have to specify these with
  % every instance of \includegraphics
  % \DeclareGraphicsExtensions{.eps}
\fi
% graphicx was written by David Carlisle and Sebastian Rahtz. It is
% required if you want graphics, photos, etc. graphicx.sty is already
% installed on most LaTeX systems. The latest version and documentation
% can be obtained at:
% http://www.ctan.org/pkg/graphicx
% Another good source of documentation is "Using Imported Graphics in
% LaTeX2e" by Keith Reckdahl which can be found at:
% http://www.ctan.org/pkg/epslatex
%
% latex, and pdflatex in dvi mode, support graphics in encapsulated
% postscript (.eps) format. pdflatex in pdf mode supports graphics
% in .pdf, .jpeg, .png and .mps (metapost) formats. Users should ensure
% that all non-photo figures use a vector format (.eps, .pdf, .mps) and
% not a bitmapped formats (.jpeg, .png). The IEEE frowns on bitmapped formats
% which can result in "jaggedy"/blurry rendering of lines and letters as
% well as large increases in file sizes.
%
% You can find documentation about the pdfTeX application at:
% http://www.tug.org/applications/pdftex

% *** MATH PACKAGES ***
%
\usepackage{amsmath}
% A popular package from the American Mathematical Society that provides
% many useful and powerful commands for dealing with mathematics.
% Do NOT use the amsbsy package under comsoc mode as that feature is
% already built into the Times Math font (newtxmath, mathtime, etc.).
%
% Also, note that the amsmath package sets \interdisplaylinepenalty to 10000
% thus preventing page breaks from occurring within multiline equations. Use:
\interdisplaylinepenalty=2500

\usepackage{graphicx}
\usepackage{verbatim}
\usepackage{amssymb,amsmath}
\usepackage{mathtools}
\usepackage{color}
\usepackage{epstopdf}
\usepackage{bm}
\usepackage[nosort]{cite}
\usepackage{array,color}

\usepackage{algorithm}
\usepackage{algpseudocode}
\usepackage{amsmath}
\DeclareMathOperator*{\argmax}{arg\,max}
\newtheorem{lemma}{\textbf{Lemma}}
\newtheorem{corollary}{\textbf{Corollary}}
\newtheorem{Theorem}{\textbf{Theorem}}
\newtheorem{Remark}{\textbf{Remark}}
\usepackage{graphics}
\usepackage{epsfig}
\usepackage[normalem]{ulem}
\usepackage{cite}
\usepackage{multicol}
\usepackage{balance}
\usepackage{caption}
\hyphenation{op-tical net-works semi-conduc-tor}

\begin{document}
%
% paper title
% Titles are generally capitalized except for words such as a, an, and, as,
% at, but, by, for, in, nor, of, on, or, the, to and up, which are usually
% not capitalized unless they are the first or last word of the title.
% Linebreaks \\ can be used within to get better formatting as desired.
% Do not put math or special symbols in the title.
\title{Design and Analysis of Online Fountain Codes for Intermediate Performance}
%
%
% author names and IEEE memberships
% note positions of commas and nonbreaking spaces ( ~ ) LaTeX will not break
% a structure at a ~ so this keeps an author's name from being broken across
% two lines.
% use \thanks{} to gain access to the first footnote area
% a separate \thanks must be used for each paragraph as LaTeX2e's \thanks
% was not built to handle multiple paragraphs
%
\author{Jingxuan~Huang,
        Zesong~Fei, %~\IEEEmembership{Senior Member,~IEEE,}
        Congzhe~Cao,
        and Ming~Xiao %~\IEEEmembership{Senior Member,~IEEE,}
\thanks{This work was supported by the National Natural Science Foundation of China under Grant No.61871032, by Chinese Ministry of Education-China Mobile Communication Corporation Research Fund under Grant MCM20170101, and by the 111 Project of China under Grant B14010. (\emph{Corresponding author: Zesong Fei})}
\thanks{J. Huang and Z. Fei are with the School of Information and Electronics, Beijing Institute of Technology, Beijing 100081, China (e-mail: 1120121556@bit.edu.cn, feizesong@bit.edu.cn).}%
\thanks{C. Cao is with the Department of Electrical and Computer Engineering, University of Alberta, Edmonton, AB T6G 1H9, Canada (email: congzhe@ualberta.ca).}%
\thanks{M. Xiao is with School of Electrical Engineering and Computer Science, KTH Royal Institute of Technology, Stockholm 100 44, Sweden (e-mail: mingx@kth.se).}
}

% note the % following the last \IEEEmembership and also \thanks -
% these prevent an unwanted space from occurring between the last author name
% and the end of the author line. i.e., if you had this:
%
% \author{....lastname \thanks{...} \thanks{...} }
%                     ^------------^------------^----Do not want these spaces!
%
% a space would be appended to the last name and could cause every name on that
% line to be shifted left slightly. This is one of those "LaTeX things". For
% instance, "\textbf{A} \textbf{B}" will typeset as "A B" not "AB". To get
% "AB" then you have to do: "\textbf{A}\textbf{B}"
% \thanks is no different in this regard, so shield the last } of each \thanks
% that ends a line with a % and do not let a space in before the next \thanks.
% Spaces after \IEEEmembership other than the last one are OK (and needed) as
% you are supposed to have spaces between the names. For what it is worth,
% this is a minor point as most people would not even notice if the said evil
% space somehow managed to creep in.

\markboth{IEEE Transactions on Communications}%
{Submitted paper}

% The paper headers
%\markboth{Journal of \LaTeX\ Class Files,~Vol.~14, No.~8, August~2015}%
%{Shell \MakeLowercase{\textit{et al.}}: Bare Demo of IEEEtran.cls for IEEE Communications Society Journals}
% The only time the second header will appear is for the odd numbered pages
% after the title page when using the twoside option.
%
% *** Note that you probably will NOT want to include the author's ***
% *** name in the headers of peer review papers.                   ***
% You can use \ifCLASSOPTIONpeerreview for conditional compilation here if
% you desire.

% If you want to put a publisher's ID mark on the page you can do it like
% this:
%\IEEEpubid{0000--0000/00\$00.00~\copyright~2015 IEEE}
% Remember, if you use this you must call \IEEEpubidadjcol in the second
% column for its text to clear the IEEEpubid mark.

% use for special paper notices
%\IEEEspecialpapernotice{(Invited Paper)}
% make the title area
\maketitle

% As a general rule, do not put math, special symbols or citations
% in the abstract or keywords.
\begin{abstract}
For the benefit of improved intermediate performance, recently online fountain codes attract much research attention. However, there is a trade-off between the intermediate performance and the full recovery overhead for online fountain codes, which prevents them to be improved simultaneously. We analyze this trade-off, and propose to improve both of these two performance. We first propose a method called Online Fountain Codes without Build-up phase (OFCNB) where the degree-1 coded symbols are transmitted at first and the build-up phase is removed to improve the intermediate performance. Then we analyze the performance of OFCNB theoretically. Motivated by the analysis results, we propose Systematic Online Fountain Codes (SOFC) to further reduce the full recovery overhead. Theoretical analysis shows that SOFC has better intermediate performance, and it also requires lower full recovery overhead when the channel erasure rate is lower than a constant. Simulation results verify the analyses and demonstrate the superior performance of OFCNB and SOFC in comparison to other online fountain codes.
\end{abstract}
\nopagebreak[4]
% Note that keywords are not normally used for peerreview papers.
\begin{IEEEkeywords}
Fountain codes, on-line codes, codes with feedback, intermediate performance.
\end{IEEEkeywords}

% For peer review papers, you can put extra information on the cover
% page as needed:
% \ifCLASSOPTIONpeerreview
% \begin{center} \bfseries EDICS Category: 3-BBND \end{center}
% \fi
%
% For peerreview papers, this IEEEtran command inserts a page break and
% creates the second title. It will be ignored for other modes.
\IEEEpeerreviewmaketitle

\section{Introduction}
% The very first letter is a 2 line initial drop letter followed
% by the rest of the first word in caps.
%
% form to use if the first word consists of a single letter:
% \IEEEPARstart{A}{demo} file is ....
%
% form to use if you need the single drop letter followed by
% normal text (unknown if ever used by the IEEE):
% \IEEEPARstart{A}{}demo file is ....
%
% Some journals put the first two words in caps:
% \IEEEPARstart{T}{his demo} file is ....
%
% Here we have the typical use of a "T" for an initial drop letter
% and "HIS" in caps to complete the first word.
\IEEEPARstart{F}{ountain} codes are a class of error control codes that can generate an infinite number of coded symbols from a limited number of source symbols. The code rates of fountain codes are not fixed. Thus these codes can adapt to various channel qualities. The concept of fountain codes was first introduced in \cite{Byers1998DF} to distribute packet data in erasure channels. Luby Transform (LT) codes \cite{Luby2002LT} are the first practical fountain coding schemes that are capacity-achieving for binary erasure channels (BECs). Then Raptor codes \cite{shokrollahi2006raptor} were introduced to improve the error-floor performance and reduce decoding complexity. Because of the low complexity and excellent performance, fountain codes are used in multimedia transmission \cite{cao2013extended} and relay networks \cite{Hussain2014BDLT}.

The performance of fountain codes is determined by the selection rule of source symbols, which is usually described by degree distributions of source symbols and coded symbols. Luby introduced the robust solition distribution (RSD) as the coded symbol degree distribution in \cite{Luby2002LT}. The scheme in \cite{Sorensen2012ripplesize} proposed a new coded symbol degree distribution, which provides a smaller ripple size to reduce decoding overhead. The scheme in \cite{Hussain2013MBLT} proposed a modified the degree distribution of source symbols that maximized the minimum degree to improve the error-floor performance. The schemes in \cite{Hayajneh2015MBLT} and \cite{Me2017SMBLT} modified the degree distributions of source symbols to improve the decoding convergence speed. Systematic fountain codes \cite{Yuan2008SLT,Xu2016SLT,Okpotse2019SFC} can also be considered as modification on degree distributions of coded symbols. In \cite{Yuan2008SLT}, the authors proposed systematic LT codes to reduce the encoding and decoding complexity. The scheme in \cite{Xu2016SLT} applied systematic LT codes in binary input additive white Gaussian noise (BIAWGN) channels and optimized the degree distribution of coded symbols. In \cite{Okpotse2019SFC}, systematic fountain codes with a truncated Poisson degree distribution were applied in distributed storage systems.

Recently the online property of fountain codes attracts much attention. The online property represents that once given an instantaneous decoding state, the encoder can find the optimal coding strategy efficiently. The online property can thus avoid the sub-optimal performance caused by the pre-defined and fixed encoding process of conventional fountain codes. Some codes with online property were proposed, such as the Growth codes \cite{kamra2006growth} and real-time oblivious codes \cite{beimel2007rt}. However, the decoding overhead of these codes is relatively large. In \cite{cassuto2015online}, Cassuto and Shokrollahi proposed a new class of online fountain codes, which utilize feedback to reduce decoding overhead significantly. The decoding state is presented by the uni-partite graph in \cite{cassuto2015online}, and the encoding scheme is divided into two phases called the build-up phase and completion phase. Based on \cite{cassuto2015online}, the schemes in \cite{Me2017UEPonline} and \cite{Cai2019URT} provided unequal error protection for online fountain codes. In \cite{huang2017improved}, the minimum degree of source symbols was maximized for online fountain codes to achieve lower full recovery overhead and less feedback. In \cite{Zhao2018IETonline}, coded symbols that contain three unrecovered source symbols were utilized for online fountain codes to reduce overhead. A new theoretical analysis framework was proposed in \cite{Me2018IOFC} to more accurately describe the relationship between the number of coded symbols transmitted and the number of source symbols recovered. In \cite{Yi2018onlineData}, online fountain codes were used in wireless sensor networks for data gathering.

One interesting problem of online fountain codes is how to reduce the full-recovery overhead and improve the intermediate performance at the same time. On the one hand, the schemes in \cite{huang2017improved} and \cite{Me2018IOFC} have lower overhead but worse intermediate performance than the original online fountain codes. However, the scheme in \cite{hashemi2016fountain} has better intermediate performance but requires more full recovery overhead. The authors of \cite{Cai2019URT} and \cite{Zhao2018IETonline} attempted to transmit degree-1 coded symbols first to improve intermediate performance, but the effect on full recovery overhead was not analyzed.

In this paper, we further investigate the trade-off between full recovery overhead and intermediate performance. In the proposed scheme, i.e., Online Fountain Codes without Build-up phase (OFCNB), the degree-1 coded symbols are transmitted first and the build-up phase in the original online fountain codes is removed. By exploiting the theoretical framework in \cite{Me2018IOFC}, the intermediate performance and the full recovery overhead of the proposed OFCNB are analyzed. Inspired by the analysis results, we further propose Systematic Online Fountain Codes (SOFC) to improve the intermediate performance and reduce the full recovery overhead. The performance of SOFC is also analyzed and compared with the scheme in \cite{cassuto2015online}. Simulation results validate the accuracy of our analyses and demonstrate the performance improvement of OFCNB and SOFC compared with different online fountain codes.

The rest of this paper is organized as follows. Section II briefly reviews the encoding and decoding process of on-line fountain codes in \cite{cassuto2015online}. We propose the OFCNB and analyze its full recovery overhead and intermediate performance in Section III. In Section IV, we improve OFCNB by making it systematic and analyze its performance. Section V presents simulation results. Finally Section VI concludes the paper.

\section{Online Fountain Codes}
In this section, we briefly review the uni-partite graph and the online fountain coding scheme proposed in \cite{cassuto2015online}.
For the online fountain codes, only the received coded symbols belonging to the following cases are used for decoding, and others are discarded.
\begin{itemize}
\item \textit{Case-1}:
 The degree-$m$ coded symbol that is the XOR of one unrecovered source symbol and $m-1$ recovered source symbols.
\item \textit{Case-2}:
The degree-$m$ coded symbol that is the XOR of two unrecovered source symbol and $m-2$ recovered source symbols.
\end{itemize}
The uni-partite graph is used to demonstrate the decoding state. Thus the decoding process can be monitored and adjusted. Different from the conventional bi-partite graph, which uses check nodes for coded symbols and variable nodes for source symbols, the nodes in uni-partite graph are only used to present source symbols, which are denoted as source nodes, as shown in Fig. \ref{unigraph}. If a source symbol is recovered at the receiver, the corresponding source node is colored black, otherwise it is colored white. If a coded symbol is the XOR of two source symbols, the two source nodes are connected by an edge. A component in a uni-partite graph is a set of white source nodes that are connected by edges, and none of which is connected to any node outside the component. The number of nodes that a component contains is called the component size. A white source node not connected by any edge is a component of size-1. Once the receiver recovers a source symbol, the corresponding white source node is colored black, then the edges connecting to this node are removed, and its neighboring nodes are colored black, too. Thus all the source nodes in the component turn black once one of the corresponding source symbol is recovered.

\begin{figure}[t]
\setlength{\abovecaptionskip}{0.cm}
\setlength{\belowcaptionskip}{-0.8cm}
\centering
\includegraphics[scale=0.35]{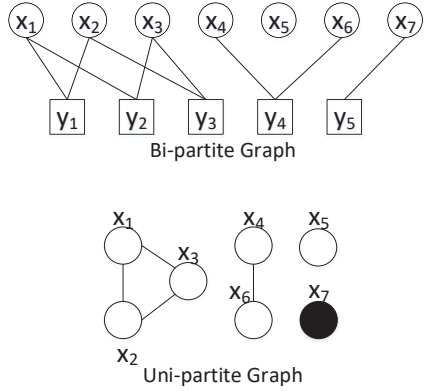}
\caption{Bi-partite graph and the corresponding uni-partite graph.}
\label{unigraph}
\end{figure}

The encoding process of the online fountain codes consists of the build-up phase and the completion phase, which is briefly described as follows.

\subsection {Build-Up phase}
We denote $k$ the number of source symbols, and the fractional size of the largest component is denoted as $\beta_0$ ($0 < \beta_0 < 1$). The initial decoding graph is an uni-partite graph with $k$ white nodes and no edges. First, coded symbols with degree-2 are transmitted to add edges and connect nodes in the decoding graph. Once the size of the largest component becomes $k\beta_0$, the receiver sends a feedback to the transmitter, and the transmitter starts to send degree-1 coded symbols, until the largest component turns black.
\subsection {Completion phase}
At the completion phase, the decoder at the receiver side feeds back the decoding state to the transmitter. The encoder at the transmitter side generates coded symbols based on the instantaneous decoding state and transmits them to the receiver. The decoder then processes those coded symbols, updates the decoding graph and feeds back the updated decoding state to the transmitter. The completion phase continues until all source nodes in the decoding graph turn black. That is, all source symbols are recovered at the receiver side. The details of the completion phase are as follows.
\begin{itemize}
\item \textbf{Transmitter side}:
Given $k\beta$ black nodes where $\beta$ is the percentage of recovered source symbols, the encoder calculates the optimal degree $\widehat{m}$ that maximize the probability that the coded symbol is a Case-1 coded symbol or a Case-2 coded symbol. Then the encoder selects $\widehat{m}$ source symbols uniformly and randomly to form a coded symbol and sends it to the receiver. The optimal degree $\widehat{m}$ satisfies
\begin{equation}
\widehat{m} = \argmax_m [P_{1}(m,\beta)+P_{2}(m,\beta)],
\label{degree}
\end{equation}
where $P_{1}(m,\beta)$ and $P_{2}(m,\beta)$ are the probabilities that a degree-$m$ coded symbol belongs to Case-1 or Case-2, respectively. $P_{1}(m,\beta)$ and $P_{2}(m,\beta)$ can be evaluated as
\begin{gather}
P_{1}(m,\beta)=\binom m1 \beta^{m-1}(1-\beta),\\
P_{2}(m,\beta)=\binom m2 \beta^{m-2}(1-\beta)^2.
\end{gather}

\item \textbf{Receiver side}: A coded symbol is utilized to update the decoding graph if it belongs to Case-1 or Case-2, otherwise it is discarded. The decoder calculates the recovery rate $\beta$ once the decoding graph is updated, and feeds it back to the encoder when the optimal degree changes because of an updated $\beta$.
\end{itemize}

\section{Online fountain codes without build-up phase}
In this section, we first propose an online fountain coding scheme without build-up phase (OFCNB), which has better intermediate performance than the conventional online fountain codes (OFC) in \cite{cassuto2015online}. Moreover, we analyze the performance of OFCNB in two special cases based on the framework in \cite{Me2018IOFC}, and then extend the performance analysis to general cases.
\subsection{The Proposed coding scheme}
From Section II, it is clear that for OFC, no source node can turn black before degree-1 coded symbols are transmitted at the end of the build-up phase. That is, no source symbol can be recovered before the end of the build-up phase, which degrades the intermediate performance. If the degree-1 coded symbols are transmitted first, the processing of building a large component will result in a partial recovery of the source symbols, and the intermediate performance can be improved. In addition, Fig. \ref{pmax} demonstrates the probability that a degree-$m$ coded symbol belongs to Case-1 and Case-2 with varying recovery ratio $\beta$. From Fig. \ref{pmax} we can see that when recovery rate $\beta \leq 0.5$, the optimal degree is 2. Thus if we send degree-1 symbols beforehand and remove the build-up phase, the encoder in the completion phase behaves still the same as the build-up phase when $\beta<0.5$. Moreover, note that the decoder holds the same rule for both build-up phase and completion phase. Thus the evolution of the decoding graph will stay the same. Motivated by the aforementioned fact, we introduce OFCNB as follows.

First, the encoder generates and transmits uniformly and randomly selected degree-1 coded symbols, until the decoder at the receiver recovers $k\gamma_0$ source symbols, where $\gamma_0$ ($0<\gamma_0\leq 1$) is a parameter that can be adjusted. Then the receiver sends a feedback to the transmitter side, and the encoder generates coded symbols with optimal degree $\widehat{m}$ by substituting $\beta=\gamma_0$ in (\ref{degree}). The rest of the encoding process is the same as the completion phase described in Subsection B of Section II.

By adjusting the value of $\gamma_0$, we can achieve trade-off between the intermediate performance and the full recovery overhead. In general, a larger $\gamma_0$ means better intermediate performance but larger recovery overhead, and vice versa. The impact of parameter $\gamma_0$ will be discussed in detail in the following subsections.

\begin{figure}[t]
\centering
\includegraphics[scale=0.4]{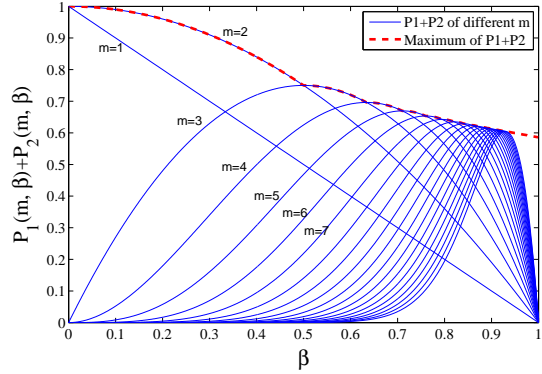}
\caption{$P_1(m,\beta)+P_2(m,\beta)$ and $max(P_1(m,\beta)+P_2(m,\beta))$ (part of the curves are from \cite{Me2018IOFC}).}
\label{pmax}
\end{figure}

\subsection{Performance Analysis for Small $\gamma_0$}
We now provide performance analysis for the proposed OFCNB based on the framework in \cite{Me2018IOFC}, which we will briefly review for the purpose of a complete discussion as follows. In this subsection we will focus on the case that the parameter $\gamma_0$ is very small, i.e., $\gamma_0\to 0$.

At the beginning of OFCNB, randomly selected degree-1 coded symbols are transmitted. Denote the expected number of degree-1 coded symbols that are required to recover $s_1$ source symbols as $E(N_{small1}(s_1))$, we can have the following theorem.
\begin{Theorem}
$E(N_{small1}(s_1))$ is given by
\begin{equation}
E(N_{small1}(s_1))=s_1.
\end{equation}
\end{Theorem}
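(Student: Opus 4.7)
The plan is to model the transmission of degree-1 coded symbols as a variant of the coupon collector problem, then exploit the smallness of $\gamma_0$ (equivalently $s_1 / k \to 0$) to argue the expected number is essentially $s_1$.

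First, I would set up the probabilistic model. Each transmitted degree-1 coded symbol is an XOR of a single source symbol chosen uniformly and independently from the $k$ source symbols. Conditioned on the receiver having already recovered $j$ distinct source symbols, the next degree-1 coded symbol recovers a new source symbol if and only if the selected source symbol lies among the $k-j$ currently unrecovered ones, which occurs with probability $(k-j)/k$. Let $T_j$ denote the number of degree-1 coded symbols needed to move from $j$ recovered source symbols to $j+1$ recovered source symbols; then $T_j$ is geometric with parameter $(k-j)/k$, so $E(T_j) = k/(k-j)$.

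Next, I would write
\begin{equation}
E(N_{small1}(s_1)) \;=\; \sum_{j=0}^{s_1-1} E(T_j) \;=\; \sum_{j=0}^{s_1-1} \frac{k}{k-j}.
\end{equation}
For $s_1 = k\gamma_0$ with $\gamma_0 \to 0$, I would use the bound $1 \le k/(k-j) \le 1/(1-\gamma_0)$ valid for $j \le s_1 - 1$, giving
\begin{equation}
s_1 \;\le\; E(N_{small1}(s_1)) \;\le\; \frac{s_1}{1-\gamma_0} \;=\; s_1 + O(s_1 \gamma_0),
\end{equation}
so in the limit $\gamma_0 \to 0$ the expected number collapses to $s_1$, which is exactly the claimed identity interpreted in this asymptotic regime (consistent with the subsection title ``Performance Analysis for Small $\gamma_0$'').

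The only conceptual subtlety, and the place I would be most careful, is the interpretation of the equality. Strictly speaking $E(N_{small1}(s_1)) > s_1$ for any finite $k$, and the theorem should be read as a leading-order statement under $\gamma_0 \to 0$. I would therefore explicitly justify dropping the $O(s_1 \gamma_0)$ correction term and note that this approximation is what underlies treating degree-1 transmissions as ``collision-free'' in the remainder of the analysis. Beyond that caveat, the argument is routine and should not present any real obstacle.
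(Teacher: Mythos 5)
Your argument is correct and rests on the same idea as the paper's proof, namely that for $\gamma_0 \to 0$ the chance of re-selecting an already-recovered source symbol is negligible, so each received degree-1 symbol recovers a fresh source symbol. The paper states this in one informal sentence, whereas you make it quantitative via the geometric/coupon-collector decomposition and the bound $s_1 \le E(N_{small1}(s_1)) \le s_1/(1-\gamma_0)$; this is a welcome sharpening (and is consistent with the exact expression $k\ln\bigl(k/(k-s_1)\bigr)$ that the paper derives later in Theorem~4 for large $\gamma_0$), but it is not a different route.
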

\begin{proof}
When $\gamma_0\to 0$, the probability that a source symbol is selected more than once is negligible. Thus one degree-1 coded symbol recovers one source symbol once it is received at the decoder.
\end{proof}

Then the encoding process comes to the completion phase. We divide the newly proposed completion phase into two stages, i.e., $\gamma_0<\beta \leq 0.5$ for the first stage and $0.5<\beta \leq 1$ for the second stage. For the first stage, the relationship between the number of recovered source symbols and the expected number of coded symbols required can be described as follows.

\begin{Theorem}
Denote the number of required source symbols for recovering $s_2$ coded symbols at the first stage as $N_{small2}(s_2)$, the expectation of $N_{small2}(s_2)$ is expressed as follows:
\begin{equation}
E(N_{small2}(s_2))=-\frac{k^2\ln(1-\frac{s_2}{k})}{2s_2}.
\end{equation}
\end{Theorem}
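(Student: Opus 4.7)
The plan is to combine the first-stage encoding rule of OFCNB with the random-graph bookkeeping of \cite{Me2018IOFC} and extract a closed-form recovery trajectory in the small-$\gamma_0$ limit. Reading the statement so that $N_{small2}(s_2)$ counts the coded symbols required to recover $s_2$ new source symbols while $\beta\in(\gamma_0,0.5]$, the proof decomposes as follows.

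First I would pin down the encoding behaviour during the first stage. By Fig.~\ref{pmax}, the optimal degree satisfies $\widehat{m}=2$ for every $\beta \leq 0.5$, so every transmitted symbol is the XOR of two uniformly random source symbols. Equivalently, each coded symbol places a uniformly random edge in the uni-partite decoding graph, whose $k\gamma_0\to 0$ initial black vertices come from the degree-$1$ prologue already handled by Theorem~1.

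Second I would compute the expected recovery increment per coded symbol. A degree-$2$ edge falls into Case-$1$ with probability $2\beta(1-\beta)$ and into Case-$2$ with probability $(1-\beta)^2$; a Case-$1$ edge blackens the white endpoint and, through the uni-partite graph, every white node in its component, while a Case-$2$ edge merges two white components so that any later black touch cascades through both. Using the component-size accounting of \cite{Me2018IOFC}, these two effects are aggregated into a single expected-increment function $g(\beta)$, and the mean trajectory is described by $ds/dn=g(s/k)$ with $s=k\beta$.

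Third I would integrate. Separating variables gives $E(N_{small2}(s_2))=\int_{0}^{s_2} ds/g(s/k)$, with lower limit $s=0$ legitimate because $\gamma_0\to 0$. Matching this integral to $-k^{2}\ln(1-s_2/k)/(2s_2)$ is the analytical core: the factor $1/s_2$ sitting outside the logarithm tells me that the effective rate is not simply $2\beta(1-\beta)$ per coded symbol but is modulated by the expected size of the component on which a Case-$1$ symbol happens to land, which is exactly what \cite{Me2018IOFC} quantifies. As a sanity check I would verify the endpoint $E(N_{small2}(k/2))=k\ln 2$, recovering the known build-up cost of producing a largest component of fractional size $1/2$ in the original online fountain codes.

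The main obstacle is the correct treatment of the subcritical-to-critical random graph that grows during this stage. Case-$2$ edges do not recover anything immediately; they translate into cascading recoveries only once the component they belong to becomes connected to a black node, and a naive ``one Case-$1$ edge recovers one source symbol'' model yields the wrong integral (off by exactly the $k/s_2$ factor). Transporting the bookkeeping of \cite{Me2018IOFC} carefully through this differential equation, while keeping the $\gamma_0\to 0$ simplification valid, is where the genuine technical work lies.
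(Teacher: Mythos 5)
There is a genuine gap: your integration framework cannot produce the stated formula. You propose $E(N_{small2}(s_2))=\int_{0}^{s_2} ds/g(s/k)$ for some local expected-increment function $g$, with lower limit $0$. But the target expression satisfies $\lim_{s_2\to 0^+} -\frac{k^2\ln(1-s_2/k)}{2s_2}=\frac{k}{2}$, not $0$: roughly $k/2$ degree-2 symbols are spent \emph{before any appreciable recovery occurs}, because with $\gamma_0\to 0$ the decoding graph is essentially a growing Erd\H{o}s--R\'enyi graph that recovers nothing until the average degree $c=2N/k$ reaches the phase-transition threshold $c=1$. No integral of a locally integrable rate from $0$ to $s_2$ can converge to $k/2$ as $s_2\to 0^+$, and making the rate degenerate (e.g., letting the expected component size diverge at criticality) only drives the integrand toward $0$ or $\infty$, neither of which helps. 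You flag the subcritical-to-critical regime as ``the main obstacle,'' but your proposed fix --- modulating the per-symbol rate by expected component size --- does not repair this structural mismatch, and the actual identification of $g$ is left undone, which is precisely the content of the theorem.

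The paper's proof takes a different and much more direct route: it does not track a per-symbol recovery rate at all. It observes that $N$ degree-2 symbols give average source-symbol degree $c=2N/k$, invokes the standard random-graph result that the giant component's fractional size $\alpha$ satisfies $\alpha+e^{-c\alpha}=1$ (so $c=-\ln(1-\alpha)/\alpha$ and $N=kc/2$), and then argues that the giant component is black with high probability $1-(1-\alpha)^{\gamma_0 k}$ because it almost surely contains one of the $\gamma_0 k$ initially recovered symbols, so $\alpha=s_2/k$. Substitution gives the formula, with the ``wasted'' initial $k/2$ symbols absorbed automatically by the threshold $c=1$. Your second step correctly describes the Case-1/Case-2 cascade mechanics and your endpoint check $E(N_{small2}(k/2))=k\ln 2$ is right, but to complete the proof you should replace the $ds/dn$ trajectory argument with the global giant-component identity (or equivalently, derive the offset $N(0^+)=k/2$ explicitly and integrate the implicit relation $\alpha=1-e^{-c\alpha}$ rather than a rate expressed in $\beta$ alone).
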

\begin{proof}
For the first stage, from Fig. \ref{pmax} we can see that only degree-2 coded symbols are generated and transmitted. Science $\gamma_0$ is small and can be neglected, the evolution of the decoding graph for the first stage is similar to the build-up phase of OFC.

For the build-up phase of OFC, the evolution of the decoding graph is affected by the average source symbol degree $c$, which is the average number of times that a source symbol is selected. When $0<c\leq 1$, several components of size $\Theta(\ln(k))$ will be built up with degree-2 coded symbols received. When $c=1$, the small components start to join rapidly to form one giant component, which is called phase transition. When $c>1$, this giant component becomes the largest component of size $\Theta(k)$, where other components are still small with size $\Theta(\ln(k))$ \cite{alon2004probabilistic}. The fractional size of the giant component $\alpha$ and $c$ satisfies the following constraint
\[
\alpha+e^{-c\alpha}=1,
\]
where $0<\alpha<1$ and $c>1$, and thus
\[
c=-\frac{\ln(1-\alpha)}{\alpha}.
\]
Here $N$, the number of required degree-2 coded symbols for building a giant component of fractional size $\alpha$, can be expressed as
\begin{equation}
N=\frac12kc=-\frac{k\ln(1-\alpha)}{2\alpha},
\label{N}
\end{equation}
since $kc$ is the total number of times that source symbols are selected and the coefficient $\frac 12$ means that a degree-2 coded symbol corresponds to two source symbols.

For the first stage of the proposed completion phase, because $\gamma_0k$ source symbols are already recovered, and the probability that they will connect to the giant component is $1-(1-\alpha)^{\gamma_0k}$, which is high in practice. Thus the giant component is very likely to turn black once it is built. When other small component joins the giant component, they are colored black, too. Thus the size of the giant component can be considered as the number of recovered source symbols, and we have
\[
\alpha=\frac{s_2}{k}.
\]
Substitute $\alpha$ into (\ref{N}), we have
\begin{align*}
E(N_{small2}(s_2))&=N
=-\frac{k\ln(1-\frac{s_2}{k})}{2\frac{s_2}{k}}
=-\frac{k^2\ln(1-\frac{s_2}{k})}{2s_2}.
\end{align*}
\end{proof}

For the second stage of the proposed completion phase, the relationship between the number of recovered source symbols and the number of required coded symbols is the same as the conventional completion phase in OFC. First we introduce the following two lemmas from \cite{Me2018IOFC}.
\begin{lemma}
At the completion phase, on average the recovery of one source symbol needs one coded symbol that is \emph{useful}. We list the three types of \emph{useful} symbols as follows.
\begin{itemize}
\item \text{Build-Up edge}: A coded symbol that corresponds to an edge in small components that are not connected to the giant component.
\item \text{Case-1 completion symbol}: A coded symbol that belongs to Case-1 at the completion phase.
\item \text{Case-2 completion symbol}: A coded symbol that belongs to Case-2 at the completion phase.
\end{itemize}
\end{lemma}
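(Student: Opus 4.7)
The plan is a conservation-style argument that pairs up useful coded symbols with recovered source symbols one-for-one, organized by the connected components of the uni-partite graph. I would first classify every symbol arriving during the completion phase into one of four disjoint buckets: discarded, Case-1 completion, Case-2 completion, or build-up edge. By construction, the union of the last three is exactly the set of \emph{useful} symbols enumerated in the lemma, so only these need to be counted.

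The core step is a per-component tally. Consider a component $C$ of size $L$ that eventually turns fully black. Edges in the uni-partite graph are created only by degree-2-style coded symbols (Case-2 completion symbols or build-up edges), and spanning a connected graph on $L$ nodes requires at least $L-1$ edges; thus $L-1$ edge-type useful symbols are consumed in order to build $C$. Once $C$ is connected, a single Case-1 completion symbol incident to any node of $C$ recolours that node black, and the cascade rule described in Section II then propagates the recovery to every one of the $L$ source symbols in $C$. The component therefore contributes $L$ recoveries at a cost of $L-1+1 = L$ useful symbols, giving the one-to-one ratio. Summing over the components that ultimately cover all $k$ source symbols yields the averaged statement of the lemma.

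The main obstacle is the treatment of redundant degree-2 symbols that close a cycle inside an already-spanned component. Such a symbol nominally satisfies the Case-2 definition but contributes no new recovery, which would break the exact $L-1$ edge count above. I would resolve this either by refining the definition of a useful Case-2/build-up symbol to require that it actually joins two previously disconnected sub-components, or by arguing that at the scales relevant to the completion phase the expected fraction of cycle-closing Case-2 symbols is vanishing and so can be absorbed into the ``on average'' qualifier. A minor additional check is that at most one Case-1 symbol per component is consumed in expectation, which again follows from the component-level bookkeeping once redundant Case-1 hits on an already-black component are treated as discarded.
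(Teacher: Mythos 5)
Your proposal is correct and follows essentially the same reasoning the paper relies on: the paper imports this lemma from \cite{Me2018IOFC} without a formal proof, offering only the one-line observation that Build-Up edges and Case-2 completion symbols each merge two components while a Case-1 completion symbol turns a component black, which is exactly the component-level bookkeeping (each useful symbol accounts for one unit of progress, $L-1$ merges plus one kill per size-$L$ component, hence one useful symbol per recovered source symbol) that you make explicit. Your handling of the cycle-closing Case-2 edges via the ``on average'' qualifier is consistent with the sparse-component regime (component sizes $\Theta(\ln k)$) assumed throughout, so no gap remains.
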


It is clear that Build-Up edges have connected two components into a single component, Case-2 completion symbols can connect two components, and a Case-1 completion symbol can turn a component into black.
\begin{lemma}
Denote $N_{c1,2}(n)$ the number of Case-1 and Case-2 completion symbols required for recovering $n$ source symbols at the completion phase with $\beta_0=0.5$. The expectation of $N_{ca1,2}(n)$ is expressed as follows:
\begin{equation}
E(N_{c1,2}(n))=n-E(N_{BU}(n))=(1-\frac{1}{4}c_0)n,
\end{equation}
where $E(N_{BU}(n))=\frac{1}{4}c_0n$ is the expected number of Build-Up edges connected to $n$ source symbols and $c_0=-\frac{\ln(1-\beta_0)}{\beta_0}=1.3863$.
\end{lemma}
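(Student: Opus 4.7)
My plan is to decompose the statement into two claims: (i) the identity $E(N_{c1,2}(n)) + E(N_{BU}(n)) = n$, and (ii) the explicit formula $E(N_{BU}(n)) = \frac{1}{4}c_0 n$, and then combine them.

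For (i), I would use Lemma 1 together with a per-component accounting on the decoding graph. By Lemma 1 each source symbol recovered at the completion phase is paid for by exactly one \emph{useful} coded symbol, so a small component of size $s$ that eventually turns black consumes $s$ useful symbols in total. In the sparse regime $c_0 \approx 1.386$, the small components are trees with overwhelming probability, so such a component carries exactly $s-1$ internal edges, which are precisely the Build-Up edges originally sitting in it plus any Case-2 completion symbols that subsequently landed on it. The remaining one useful symbol is the Case-1 completion symbol that finally colors the whole component black. Summing the counts $(s-1) + 1 = s$ over all components recovered during completion gives $E(N_{BU}(n)) + E(N_{c1,2}(n)) = n$.

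For (ii), I would count the Build-Up edges by viewing the Build-Up phase as producing an essentially $G(k, c_0/k)$ Erd\H{o}s--R\'enyi random graph with $\tfrac{1}{2}kc_0$ independent random degree-2 edges and giant-component fraction $\beta_0 = 0.5$. By symmetry the expected number of edges with both endpoints outside the giant is $\tfrac{1}{2}kc_0 (1-\beta_0)^2 = \tfrac{1}{8}kc_0$, and these edges are distributed over $(1-\beta_0)k = 0.5k$ small-component source symbols. Because the completion-phase recoveries occur component-by-component and, by exchangeability, each small-component source symbol carries on average the same expected share $(\tfrac{1}{8}kc_0)/(0.5k) = c_0/4$ of internal Build-Up edges, linearity of expectation gives $E(N_{BU}(n)) = (c_0/4)\,n$. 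Combining (i) and (ii) then yields $E(N_{c1,2}(n)) = n - \tfrac{1}{4}c_0 n = (1 - \tfrac{1}{4}c_0)n$, as claimed.

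The main obstacle is rigorously justifying the two structural assumptions: that small components are tree-shaped (so contain exactly $s-1$ internal edges) and that the component-by-component recovery does not bias the per-node share $c_0/4$ of Build-Up edges. Both are standard facts about sparse Erd\H{o}s--R\'enyi graphs---cycles in sub-critical components have vanishing expected count, and conditional on the partition into giant and non-giant vertices the induced edge distribution is asymptotically exchangeable---but stating them cleanly and confirming that the completion-phase selection rule induces no systematic bias on which components are recovered first is the step where I would spend the most care.
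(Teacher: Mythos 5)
Your proposal is correct, and it is worth noting that the paper itself offers no proof of this lemma: both Lemma 1 and Lemma 2 are imported verbatim from \cite{Me2018IOFC}, so there is no in-paper argument to compare against. Your two-step decomposition reconstructs exactly the accounting that the surrounding text relies on: the identity $E(N_{c1,2}(n))+E(N_{BU}(n))=n$ is just Lemma 1 (one \emph{useful} symbol per recovered symbol, with the useful symbols partitioned into Build-Up edges and Case-1/Case-2 completion symbols), and your edge count $\frac{1}{2}kc_0(1-\beta_0)^2=\frac{1}{8}kc_0$ Build-Up edges spread over $(1-\beta_0)k$ small-component nodes, giving $\frac{1}{4}c_0$ per node, matches the figure the paper itself uses later in the proof of Theorem 11 ("$\frac14 c_0\cdot\frac12 k=\frac18 kc_0$ of them are Build-Up edges"). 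The only caveat, which you already flag, is that the per-node share $\frac{1}{4}c_0$ is an average over all small components, so applying it to a partial count $n<\frac12 k$ implicitly assumes the recovery order does not favor edge-rich components; the lemma as stated is a linear (mean-field) approximation and the paper treats it as such, so this is not a defect of your argument relative to the source.
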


Now we propose Theorem 3 to analyze the second stage of the proposed completion phase as follows.

\begin{Theorem}
Denote the number of coded symbols required for recovering $s_3$ source symbols at the second stage as $N_{small3}(s_3)$, the expectation of $N_{small3}(s_3)$ is expressed as
\begin{equation}
E(N_{small3}(s_3))=(1-\frac 14c_0)\cdot \sum_{i=\frac12k}^{s_3+\frac12k-1}{\frac{1}{P_M(i)}}.
\end{equation}
The function $P_M(n)$, which denotes the probability that a transmitted coded symbol is a Case-1 or Case-2 completion symbol when $n$ source symbols are recovered, can be expressed as
\begin{equation}
P_M(n)=P_{1}(\widehat{m},\frac{n}{k})+P_{2}(\widehat{m},\frac{n}{k}),
\end{equation}
where $\widehat{m}$ is defined as (\ref{degree}), $P_1(m,\beta)$ and $P_2(m,\beta)$ are defined as (2) and (3), respectively.
\end{Theorem}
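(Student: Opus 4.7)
The plan is to express $E(N_{small3}(s_3))$ as a sum over intermediate states indexed by the current number $i$ of recovered source symbols, ranging from $\frac{k}{2}$ (end of the first stage, by Theorem 2) to $s_3+\frac{k}{2}-1$, where the $i$-th summand is the expected number of coded transmissions needed to advance from $i$ to $i+1$. I would decompose this per-state cost as (factor A) the amortized number of Case-1/Case-2 completion symbols consumed per recovered source symbol, times (factor B) the expected number of transmissions that produce one such useful symbol while the decoder sits at state $i$.

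For factor A, I would argue that the second stage of OFCNB begins from a decoding graph statistically equivalent to the one at the start of OFC's completion phase with $\beta_0=0.5$. Theorem 2 already supplies this: a giant component of fractional size $\frac{1}{2}$ has been built during the first stage, and by the argument in its proof this component turns black with probability $1-(1-\frac{1}{2})^{\gamma_0 k}$, which is essentially $1$ since at least one of the $\gamma_0 k$ pre-recovered source symbols must lie inside it. Lemma 2 then applies and yields $1-\frac{1}{4}c_0$ useful completion symbols per recovered source symbol on average. For factor B, the encoder rule specifies that at state $i$ the degree $\widehat{m}$ is fixed by (\ref{degree}) with $\beta=i/k$, and $\widehat{m}$ constituent source symbols are sampled uniformly at random; each transmission is therefore independently useful with probability $P_M(i)=P_1(\widehat{m},i/k)+P_2(\widehat{m},i/k)$, so the waiting time to the next useful symbol is geometric with mean $1/P_M(i)$. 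Multiplying the two factors and summing by linearity of expectation from $i=\frac{k}{2}$ to $i=s_3+\frac{k}{2}-1$ produces the stated formula.

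The main obstacle I expect is justifying that the Lemma 2 ratio, which is only an aggregate statement over the whole completion phase, may be distributed pointwise across each individual state transition. The useful-symbol cost of recovering the $j$-th additional source symbol is in fact random and $j$-dependent: some recoveries consume only one Case-1 symbol, while others need a Case-2 symbol first to merge two white components before a later Case-1 triggers the cascade. My planned fix is to write the total transmissions as $\sum_i N_i/P_M(i)$, where $N_i$ counts useful symbols received while in state $i$, and to argue from the symmetry of the uniform source-symbol sampling that $E(N_i)$ is independent of $i$; its common value must then equal the global average $1-\frac{1}{4}c_0$ supplied by Lemma 2, after which linearity of expectation closes the derivation.
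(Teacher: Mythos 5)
Your proposal is correct and follows essentially the same route as the paper: the paper likewise multiplies the per-symbol useful-symbol cost $1-\frac{1}{4}c_0$ (obtained as the increment $E(N_{c1,2}(n))-E(N_{c1,2}(n-1))$ from Lemma 2) by the geometric waiting time $1/P_M(i)$ at each state and sums by induction from $i=\frac{1}{2}k$. The pointwise-distribution issue you flag is resolved in the paper exactly as in your fix, since the linearity of $E(N_{c1,2}(n))$ in $n$ makes the expected increment constant across states.
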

\begin{proof}
Note that in the first stage  build-up edges will also be introduced. From Lemma 1 and Lemma 2 we can find that in order to recover one source symbol, there are
\[
E(N_{ca1,2}(n))-E(N_{ca1,2}(n-1))=1-\frac{1}{4}c_0
\]
Case-1 or Case-2 completion symbols required. When $s_0$ source symbols are recovered at the second stage, in total $s_0+\frac12k$ source symbols are recovered, and the probability that a coded symbol is a Case-1 or Case-2 completion symbol can be expressed as
\[
P_M(s_0+\frac12k)=P_{1}(\widehat{m},\frac{s_0+\frac12k}{k})+P_{2}(\widehat{m},\frac{s_0+\frac12k}{k}).
\]
Thus $(1-\frac{1}{4}c_0)/P_M(s_0+\frac12k)$ coded symbols are required for recovering one source symbol. Now we have $s_0+1$ source symbols recovered in the second stage. Continue this recovery process and by using induction we can obtain (8).
The detailed proof can be found in Section III from \cite{Me2018IOFC}.
\end{proof}

Finally we can get the relationship between the number of recovered source symbols and the number of required coded symbols for the entire coding process when $\gamma_0\to 0$. Combing the three theorems above we present the following remark.
\begin{Remark}
When $\gamma_0\to 0$, denote the number of required coded symbols for recovering $s$ source symbols as $N_{s}(s)$, the expectation of $N_{s}(s)$ is expressed as follow:
\begin{align}
E(N_{s}(s))= &\left\{
\begin{aligned}
&s,\text{when } 0<s \leq \gamma_0k;\\
&-\frac{k^2\ln(1-\frac{s-\gamma_0k}{k})}{2(s-\gamma_0k)},\text{when } \gamma_0k<s\leq \frac12k;\\
&(1-\frac 14c_0)\cdot \sum_{i=\frac12k}^{s-1}{\frac{1}{P_M(i)}}-\frac{k\ln(\frac12+\gamma_0)}{1-2\gamma_0},\text{when } \frac12k<s \leq k.
\end{aligned}
\right.
\end{align}
\end{Remark}

Note that before stage 1 of the proposed completion phase, $\gamma_0k$ source symbols have been recovered. Before stage 2 of the proposed completion phase, $\frac12 k$ source symbols have been recovered.

For OFC in \cite{cassuto2015online} with $\beta_0=0.5$, we have
\begin{equation}
E(N_{OFC}(s))=k\ln(2)+(1-\frac 14c_0)\cdot \sum_{i=\frac12k}^{s-1}{\frac{1}{P_M(i)}},
\end{equation}
when $s>0.5$ \cite{Me2018IOFC}, where $N_{OFC}(s)$ is the number of required coded symbols for recovering $s$ source symbols in OFC. Note that no source symbols are recovered before $k\ln(2)$ coded symbols are transmitted. Comparing (10) and (11), we have following corollary.
\begin{corollary}
OFCNB has better intermediate performance and the same full recovery overhead compared with OFC when $\gamma_0\to 0$.
\end{corollary}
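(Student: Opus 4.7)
The plan is to derive both halves of the corollary from the closed-form expectations in Remark~1 and equation~(11), with the limit $\gamma_0 \to 0$ taken after substitution. For the full-recovery claim I would set $s = k$ in the third branch of~(10), which gives
$E(N_s(k)) = (1 - c_0/4)\sum_{i=k/2}^{k-1} 1/P_M(i) - k\ln(1/2 + \gamma_0)/(1 - 2\gamma_0)$.
Since $-k\ln(1/2+\gamma_0)/(1-2\gamma_0) \to k\ln 2$ as $\gamma_0 \to 0$, this reduces to exactly the expression for $E(N_{OFC}(k))$ in~(11), establishing equality of the full-recovery overheads.

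For the intermediate-performance claim I would compare $E(N_s(s))$ with $E(N_{OFC}(s))$ over $s \in (0,k)$, splitting along the three branches of~(10). For $s > k/2$ the same limit argument shows the two expressions coincide, so OFCNB is at least not worse. For $s \in (0, \gamma_0 k]$, Theorem~1 gives $E(N_s(s)) = s$, which is strictly smaller than the $k\ln 2$ coded symbols OFC needs before recovering even its first source symbol (as noted directly after~(11)). For $s \in (\gamma_0 k, k/2]$, the second branch of~(10) in the limit $\gamma_0 \to 0$ becomes $-k^2\ln(1 - s/k)/(2s)$, which must be shown strictly less than $k\ln 2$ for $s < k/2$.

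The main obstacle is this last comparison: I would verify that $g(x) = -\ln(1-x)/(2x)$ is strictly increasing on $(0, 1/2]$ with $g(1/2) = \ln 2$. Differentiating reduces monotonicity to the inequality $x/(1-x) + \ln(1-x) > 0$ on $(0, 1/2]$; combining the series $x/(1-x) = \sum_{n\geq 1} x^n$ and $\ln(1-x) = -\sum_{n\geq 1} x^n/n$ rewrites the left side as $\sum_{n\geq 2}(1 - 1/n)x^n$, a sum of positive terms. The boundary value $g(1/2) = \ln 2$ is immediate. Together these yield $E(N_s(s)) < k\ln 2 \leq E(N_{OFC}(s))$ for every $s \in (\gamma_0 k, k/2)$, which combined with equality for $s \geq k/2$ and the trivial gain for $s \leq \gamma_0 k$ gives the intermediate-performance claim and completes the proof of the corollary.
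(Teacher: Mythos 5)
Your proposal is correct and follows essentially the same route as the paper, which proves the corollary simply by the remark ``Comparing (10) and (11)'' after noting that OFC recovers nothing before $k\ln 2$ coded symbols are sent. You merely make that comparison explicit, and your added verification that $-\ln(1-x)/(2x)$ is strictly increasing on $(0,\tfrac12]$ with value $\ln 2$ at $x=\tfrac12$ is a correct filling-in of a detail the paper leaves implicit.
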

\subsection{Performance Analysis for Large $\gamma_0$}
After we obtain performance analysis for the case where $\gamma_0\to 0$, we now analyze another case where $0.5 \leq \gamma_0 \leq 1 $.

A large $\gamma_0$ means many degree-1 symbols are transmitted at first, which results in good intermediate performance. However, with an increasing number of source symbols recovered, the probability that a randomly selected degree-1 coded symbol is a recovered source symbol also becomes larger. Note that duplicate coded symbols are not helpful for decoding, which results in larger full recovery overheads. First we present the following theorem that describes the number of randomly selected degree-1 coded symbols required for recovering $s_1$ ($s_1<\gamma_0k$) source symbols.
\begin{Theorem}
Denote the number of required degree-1 coded symbols for recovering $s_1$ ($0<s_1 \leq \gamma_0k$) source symbols as $N_{large1}(s_1)$, the expectation of $N_{large1}(s_1)$ can be expressed as follows:
\begin{equation}
E(N_{large1}(s_1))=k\ln(\frac{k}{k-s_1}).
\end{equation}
\end{Theorem}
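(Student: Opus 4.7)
The plan is to recognize this as a classical coupon-collector problem. Since the encoder produces each degree-1 coded symbol by picking one source symbol uniformly and independently at random from the $k$ source symbols, and a coded symbol contributes to recovery only if it indexes a source symbol that has not yet been recovered, the dynamics depend only on the current number of recovered source symbols.

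First I would condition on the event that $j$ distinct source symbols have already been recovered. Under this conditioning, each new degree-1 coded symbol is fresh (i.e., recovers a previously unrecovered source symbol) with probability $(k-j)/k$, independently across draws. Hence the number $T_j$ of degree-1 coded symbols needed to move from $j$ to $j+1$ recovered source symbols is geometrically distributed with success parameter $(k-j)/k$, giving $E[T_j] = k/(k-j)$.

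Next, by linearity of expectation,
\begin{equation*}
E(N_{large1}(s_1)) \;=\; \sum_{j=0}^{s_1-1} E[T_j] \;=\; \sum_{j=0}^{s_1-1} \frac{k}{k-j} \;=\; k \sum_{i=k-s_1+1}^{k} \frac{1}{i}.
\end{equation*}
Since the paper's framework works in the large-$k$ regime, I would then approximate this partial harmonic sum by the integral $\int_{k-s_1}^{k} \frac{1}{x}\,dx = \ln\!\big(k/(k-s_1)\big)$, yielding the claimed expression $k\ln\!\big(k/(k-s_1)\big)$.

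The only real obstacle is the passage from the exact harmonic sum to the logarithmic closed form, which is an $O(1/(k-s_1))$ approximation rather than an equality. This is consistent with the asymptotic style already adopted in the earlier theorems (where phase-transition and component-size formulas from \cite{alon2004probabilistic} are also used in the $k \to \infty$ sense), so as long as $\gamma_0 k < k$ (so that $k-s_1$ stays proportional to $k$), the approximation is tight and the stated formula is valid.
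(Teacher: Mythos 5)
Your proposal is correct, but it reaches the formula by a genuinely different route than the paper. The paper argues through the degree distribution of the source symbols: because selection is uniform and random, after $N$ degree-1 transmissions the source-symbol degrees are (approximately) Poisson with mean $c=N/k$, only degree-0 symbols remain unrecovered, so $E(s_1/k)=1-e^{-c}$, and inverting this relation gives $N=k\ln\bigl(k/(k-s_1)\bigr)$. You instead run the classical coupon-collector decomposition: the waiting time to go from $j$ to $j+1$ recovered symbols is geometric with mean $k/(k-j)$, so $E(N_{large1}(s_1))=k\sum_{i=k-s_1+1}^{k}1/i$, which you then approximate by $k\ln\bigl(k/(k-s_1)\bigr)$. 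Each approach buys something: yours gives the exact expectation as a harmonic sum with an explicit $O(1/(k-s_1))$ error in the final logarithmic form, and avoids the paper's implicit (and not strictly rigorous) step of computing $E[s_1]$ as a function of $N$ and then inverting that relation to claim a formula for $E[N(s_1)]$; the paper's occupancy/Poisson viewpoint, on the other hand, ties directly into the framework used throughout (the same Poisson degree-distribution argument reappears in the lossy-channel extension at the end of Section III), which is why the authors phrase it that way. Your caveat that the approximation requires $k-s_1=\Theta(k)$ is apt and is implicitly assumed by the paper as well.
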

\begin{proof}
Because of random selection, the degree distribution of source symbols follows Poisson distribution \cite{alon2004probabilistic}. The probability that a source symbol has a degree $d$ can be expressed as
\[
P(X=d)=\frac{c^d}{d!}e^{-c},
\]
where $c$ is the average degree of source symbols. Note that only degree-1 coded symbols are generated. That is, once a source symbol is selected to form a coded symbol, it can be recovered right away at the receiver side. Thus only degree-0 source symbols cannot be recovered, and we have
\[
E(\frac{s_1}{k})=P(X\neq0)=1-P(X=0)=1-\frac{c^0}{0!}e^{-c}=1-e^{-c}.
\]
Because a degree-1 coded symbol will increase the average source symbol degree by 1, we have
\[
c=\frac{N_{large1}}{k},
\]
thus
\[
E(\frac{s_1}{k})=1-e^{-\frac{N_{large1}}{k}},
\]
and Theorem 4 is obtained.
\end{proof}
%\begin{proof}
%At the beginning of the encoding process, no source symbol is recovered, and the one random selected coded symbol can recover 1 source symbol with probability 1 once it is received. So the expectation number of degree-1 coded symbol for recovering the first source symbol is 1. When $s_0$ source symbol is recovered, a random selected degree-1 coded symbol is redundant with probability $\frac{s_0}{k}$, and recovering one more arbitrarily source symbol requires
%\[
%N=\frac{1}{1-\frac{s_0}{k}}=\frac{k}{k-s_0}
%\]
%degree-1 coded symbols. Then we have $s_0+1$ recovered source symbols, the redundant probability changes to $\frac{s_0+1}{k}$, and recovering one more source symbol requires
%\[
%N=\frac{1}{1-\frac{s_0+1}{k}}=\frac{k}{k-(s_0+1)}
%\]
%random selected degree-1 coded symbols. By using the induction, we can find that for recovering $s_1$ coded symbols, the expectation number of the required degree-1 coded symbols is
%\begin{align*}
%E(N_{large1}(s_1))&=1+\frac{k}{k-1}+...+\frac{k}{k-(s_1-1)}\\
%&=k\sum_{i=k-s_1+1}^{k}\frac1i.
%\end{align*}
%\end{proof}

We can also find that because the coded symbol degree is 1, all source symbols with degree $d>1$ generates at least one duplicate coded symbol.

\begin{Theorem}
At the completion phase, denote the number of coded symbols required for recovering $s_2$ ($0< s_2\leq (1-\gamma_0)k$) source symbols as $N_{large2}(s_2)$, the expectation of $N_{large2}(s_2)$ can be expressed as
\begin{equation}
E(N_{large2}(s_2))=\sum_{i=\gamma_0k}^{s_2+\gamma_0k-1}{\frac{1}{P_M(i)}}.
\end{equation}
\end{Theorem}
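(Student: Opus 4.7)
The plan is to mirror the structure of the proof of Theorem 3, but exploit the crucial structural difference that the pre-completion phase of OFCNB with large $\gamma_0$ creates no edges in the decoding graph. I would organize the argument in four steps.

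First, I would describe the state of the decoding graph at the moment the completion phase begins. Because only degree-1 coded symbols are transmitted prior to this point, the $\gamma_0 k$ recovered source symbols appear as isolated black nodes, while the remaining $(1-\gamma_0)k$ unrecovered source symbols appear as isolated white nodes, and the graph contains no edges. In particular, the number of build-up edges is zero, so the only useful symbols (in the sense of Lemma 1) that can be transmitted during the completion phase are Case-1 and Case-2 completion symbols.

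Second, I would invoke Lemma 1 and Lemma 2 specialized to this setting. Lemma 1 says that on average one useful symbol is needed per recovered source symbol; since $E(N_{BU}(n))=0$ here, Lemma 2 gives $E(N_{c1,2}(n)) = n$ (not $(1-\tfrac{1}{4}c_0)n$ as in the $\beta_0=0.5$ case of Theorem 3). Thus to recover $s_2$ source symbols in the completion phase, $s_2$ Case-1 or Case-2 completion symbols are required in expectation.

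Third, I would carry out the per-step counting. When a total of $i$ source symbols have been recovered, the encoder uses the optimal degree $\widehat{m}$ from (1), and the probability that the transmitted coded symbol is a Case-1 or Case-2 completion symbol is $P_M(i) = P_1(\widehat{m}, i/k) + P_2(\widehat{m}, i/k)$. By the expectation of the geometric distribution, the expected number of transmitted coded symbols to obtain one such useful symbol is $1/P_M(i)$. Applying this to each of the $s_2$ recovery events, as $i$ ranges over $\gamma_0 k, \gamma_0 k + 1, \ldots, s_2 + \gamma_0 k - 1$, and invoking linearity of expectation together with an induction identical to the one in Theorem 3, yields the claimed sum.

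The main obstacle is justifying cleanly that the coefficient in front of the sum is exactly $1$ rather than $(1-\tfrac{1}{4}c_0)$ — this is the single place where the proof departs from Theorem 3. The justification amounts to verifying that degree-1 transmissions cannot introduce edges into the uni-partite graph, so no build-up edges are ever present, which is immediate from the definition of Build-Up edges in Lemma 1. Beyond this conceptual step, the remainder is routine bookkeeping and reuse of the induction argument already spelled out in the proof of Theorem 3.
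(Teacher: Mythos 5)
Your proposal is correct and follows essentially the same route as the paper: invoke Lemma 1, observe that no Build-Up edges exist in this regime so each recovery costs exactly one Case-1 or Case-2 completion symbol, and then sum $1/P_M(i)$ over the recovery events from $i=\gamma_0 k$ to $s_2+\gamma_0 k - 1$. Your justification for the unit coefficient (the degree-1 phase creates no edges in the uni-partite graph) is, if anything, slightly more explicit than the paper's, which instead notes that no degree-2 symbols are generated in the completion phase; both observations lead to the same conclusion.
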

\begin{proof}
Based on Lemma 1, there are three types of \emph{useful} coded symbols, and on average the recovery of one source symbol needs one \emph{useful} coded symbol. For the proposed completion phase, when $\gamma_0\geq0.5$, no degree-2 coded symbol is generated. Thus the \emph{useful} coded symbols only consist of Case-1 completion symbols and Case-2 completion symbols. When $s_0$ source symbols are recovered in the proposed completion phase, $s_0+\gamma_0k$ source symbols are recovered in total, and the probability that a coded symbol is a Case-1 completion symbol or a Case-2 completion symbol is
\[
P_M(s_0+\gamma_0k)=P_{1}(\widehat{m},\frac{s_0+\gamma_0k}{k})+P_{2}(\widehat{m},\frac{s_0+\gamma_0k}{k}).
\]
In order to recover one more source symbol, the expectation of the required coded symbol is $1/P_M(s_0+\gamma_0k)$. Continue this recovery process until we have $s_2+\gamma_0k$ source symbols recovered, we can obtain Theorem 5.
\end{proof}

Combing Theorem 4 and Theorem 5, we can obtain the analysis for the entire encoding process.
\begin{Remark}
When $0.5\leq \gamma_0 \leq 1$, denote the number of coded symbols required for recovering $s$ source symbols as $N_{l}(s)$, the expectation of $N_{l}(s)$ is expressed as
\begin{align}
E(N_{l}(s))=
&\left\{
\begin{aligned}
&k\ln(\frac{k}{k-s}),&\text{when } 0<s\leq \gamma_0k;\\
&\sum_{i=\gamma_0k}^{s-1}{\frac{1}{P_M(i)}}-k\ln(1-\gamma_0),&\text{when } \gamma_0k<s \leq k.
\end{aligned}
\right.
\end{align}
\end{Remark}

Comparing the results in Remark 1 and Remark 2, we can obtain the following theorem.
\begin{Theorem}
Denote the number of coded symbols required to recover $\frac12k$ source symbols as $N_{\frac12k}$. For an arbitrary $\gamma_0$, we have
\begin{equation}
E(N_{\frac12k})=k\ln(2).
\end{equation}
\end{Theorem}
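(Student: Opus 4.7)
The plan is to split on $\gamma_0$ and verify the claim in each regime. The key observation is that the milestone $\beta = \frac{1}{2}$ is reached at different points of the encoding depending on $\gamma_0$: when $\gamma_0 \geq 0.5$ it sits inside the degree-$1$ phase, whereas when $\gamma_0 < 0.5$ it sits exactly at the boundary between stage $1$ and stage $2$ of the completion phase. In each of the two ``extreme'' regimes one of the closed-form pieces already derived evaluates directly at $s = \frac{1}{2}k$, and I would dispatch them as one-line substitutions: for $\gamma_0 \geq 0.5$ the first branch of Remark~2 gives $E(N_{\frac{1}{2}k}) = k \ln(k/(k-\frac{1}{2}k)) = k\ln 2$, while as $\gamma_0 \to 0$ the middle branch of Remark~1 gives $E(N_{\frac{1}{2}k}) = -k^{2}\ln(\frac{1}{2})/(2 \cdot \frac{1}{2}k) = k\ln 2$; the two endpoints agree in value.

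The substantive task is to fill in the intermediate range $0 < \gamma_0 < 0.5$, which is not covered directly by either remark. Here I would decompose $E(N_{\frac{1}{2}k}) = E(N_1) + E(N_2)$, where Theorem~4 at $s_1 = \gamma_0 k$ supplies $E(N_1) = k \ln(1/(1-\gamma_0))$, and $E(N_2)$ is the expected number of degree-$2$ symbols transmitted in stage~$1$ of the completion phase to carry the recovery count from $\gamma_0 k$ to $\frac{1}{2}k$. The natural route for $E(N_2)$ is to generalize the random-graph argument of Theorem~2 to allow $\gamma_0 k$ pre-existing seeds: I would model the degree-$2$ transmissions as an Erd\H{o}s--R\'{e}nyi graph $G(k,N_2)$ with average source degree $c_2 = 2N_2/k$, identify the set of recovered sources with the union of all components touching any seed, and combine the giant-component relation $\alpha + e^{-c_2 \alpha} = 1$ with a Borel / tree-function (Lambert-$W$) identity of the form $\tau e^{-\tau} = c_2\, e^{-(c_1+c_2)}$, where $c_1 = \ln(1/(1-\gamma_0))$ is the Poisson rate induced by the degree-$1$ phase. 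Imposing that the expected recovered fraction equal $\frac{1}{2}$ then reduces to the clean invariant $c_1 + c_2/2 = \ln 2$, equivalently $N_1 + N_2 = k \ln 2$, independent of the split between the two phases.

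The main obstacle I anticipate lies in this generating-function step, because Theorem~2 was proved only in the limit $\gamma_0 \to 0$, where all but the giant component can be ignored, whereas a macroscopic seed fraction scatters seeds across many small components as well and blackens some of them directly. I would therefore need to keep track of both the giant-component contribution and the Borel-distributed small-component contribution, and verify that their net effect simplifies to $\frac{1}{2}-\gamma_0$ under the constraint $c_1 + c_2/2 = \ln 2$. Once that cancellation is carried through, the invariant $N_1 + N_2 = k \ln 2$ holds uniformly in $\gamma_0 \in (0,1]$, and the two extreme substitutions above become special cases of a single identity, completing the proof of Theorem~6.
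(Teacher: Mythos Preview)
Your treatment of the two endpoint regimes ($\gamma_0 \to 0$ and $\gamma_0 \ge \tfrac12$) coincides with the paper's. For the intermediate range $0 < \gamma_0 < \tfrac12$, however, the paper takes a different and much shorter route: it does not analyze the seeded random graph at all, but simply asserts that the mixed degree-$1$/degree-$2$ process is sandwiched between the two pure processes, writes $E(N_s(\tfrac12 k)) \le E(N_{\frac12 k}) \le E(N_l(\tfrac12 k))$, and concludes by squeeze since both bounds equal $k\ln 2$.

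What your direct computation buys is rigor and a little extra generality: the paper offers no argument for why a mixture must lie between the two extremes, whereas your seeded-graph analysis would yield the relation $-\ln(1-\beta) = c_1 + c_2\beta$ for every $\beta$, with Theorem~6 as the special case $\beta = \tfrac12$. The obstacle you anticipate is also smaller than you suggest: there is no need to track giant and small components separately or to invoke Borel or Lambert-$W$ identities. Writing $q = 1-\beta$ for the probability that a node stays white, a single branching-process fixed point gives $q = e^{-c_1}\sum_{d\ge 0}\tfrac{c_2^d e^{-c_2}}{d!}\,q^d = e^{-c_1 - c_2(1-q)}$, hence $c_1 + \tfrac12 c_2 = \ln 2$ at $\beta = \tfrac12$, and $N_1 + N_2 = k\ln 2$ drops out immediately with no casework.
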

\begin{proof}
For the case that $\gamma_0$ is small enough to be neglected, only degree-2 coded symbols are generated before $s=\frac12 k$. For the case that $\gamma_0$ is larger than 0.5, only degree-1 coded symbols are generated before $s=\frac12 k$. The rest of situations are intermediate states of these two special cases, where both degree-2 coded symbols and degree-1 coded symbols are generated. Thus the number of required coded symbols to recover $\frac12k$ source symbols is bounded by $E(N_s(\frac12k))$ and $E(N_{l}(\frac12k))$, i.e.,
\[
E(N_s(\frac12k))\leq E(N_{\frac12k})\leq E(N_{l}(\frac12k)).
\]
When $\gamma_0\to0$, from Remark 1 we have
\[
E(N_s(\frac12k))=-\frac{k\ln(\frac12+\gamma_0)}{2(\frac12-\gamma_0)},
\]
by neglecting $\gamma_0k$ we have
\[
E(N_s(\frac12k))=-\frac{k\ln(\frac12)}{2\cdot\frac12}=-k\ln(\frac12)=k\ln(2).
\]
On the other hand, when $0.5\leq \gamma_0 \leq 1$, from Remark 2 we have
\[
E(N_{l}(\frac12k))=k\ln(\frac{k}{k-\frac12k})=k\ln(2)=E(N_s(\frac12k)).
\]
Thus
\[
E(N_s(\frac12k))=E(N_{\frac12k})=E(N_{l}(\frac12k))=k\ln(2).
\]
\end{proof}

Based on Theorem 6, we can discuss the effect of $\gamma_0$ in detail. First we focus on the process when $\beta<0.5$. If $\gamma_0$ is small, the encoding process will suffer from the phase transition, which degrades the intermediate performance. If $\gamma_0<0.5$, a larger $\gamma_0$ means more degree-1 coded symbols are transmitted that can be decoded immediately, and thus better intermediate performance. If $\gamma_0>0.5$, the extra degree-1 coded symbols transmitted will not positively affect the intermediate performance when $\beta<0.5$. Thus the intermediate performance is the same as $\gamma_0=0.5$.

When $\beta=0.5$, from Theorem 6, we can find the number of required coded symbols for recovering half of the source symbols is the same for different $\gamma_0$.

Then we enter the process when the recovery rate $\beta>0.5$. Comparing (8) with (13), we can find that a small $\gamma_0$ needs less full recovery overhead. If $\gamma_0>0.5$, degree-1 coded symbols are still generated after $\beta>0.5$, but the probability that a degree-1 coded symbol belongs to Case-1 becomes small. Thus the full recovery overhead is larger for a larger $\gamma_0$ when $\gamma_0>0.5$. From the discussion above we can find that setting $\gamma_0>0.5$ results in larger overhead with the same intermediate performance compared with $\gamma_0=0.5$, thus is not recommended.

To sum up, we can formulate the following corollaries.
%which indicates that the intermediate performance is not improved with the increasing of $\gamma_0$ if $\gamma_0>0.5$. However, a larger $\gamma_0$ will always causes larger full recovery overhead, because the probability that a degree-1 coded symbol is \emph{useful} is small than the maximum probability and decreases rapidly with the increasing of the recovery rate $\beta$. Thus compared with $\gamma_0=0.5$, setting $\gamma_0>0.5$ causes larger overhead with the same intermediate performance.
\begin{corollary}
If $\gamma_0\leq 0.5$, a larger $\gamma_0$ means better intermediate performance but higher full recovery overhead, and vice versa.
\end{corollary}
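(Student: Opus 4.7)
The plan is to exploit Theorem 6 as a pivot and compare the sub-process $\beta\leq\tfrac12$ (which governs intermediate performance) with the sub-process $\beta>\tfrac12$ (which governs the full recovery overhead) for two values $\gamma_0'<\gamma_0''\leq\tfrac12$. First I would extend the reasoning behind Remarks 1 and 2 to a general $\gamma_0\in[0,\tfrac12]$ by dividing the encoding into three stages: a degree-1 stage for $0<\beta\leq\gamma_0$, a degree-2 stage for $\gamma_0<\beta\leq\tfrac12$, and the standard completion stage for $\tfrac12<\beta\leq 1$. Theorem 6 anchors the argument, since it pins the number of transmitted symbols up to $\beta=\tfrac12$ at $k\ln 2$ independently of $\gamma_0$.

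For the intermediate-performance claim, I would fix an intermediate target $s\leq\tfrac12 k$ and count recoveries. Theorem 1 shows that during the degree-1 stage each received symbol contributes one recovery (selection collisions being negligible while $\beta\ll 1$), so recovery happens at rate $\approx 1$ per symbol; during the degree-2 stage, in contrast, recoveries are concentrated near the phase-transition point at $c=1$ (Theorem 2), so early in that stage the marginal recovery rate is much smaller than one. Consequently, enlarging $\gamma_0$ reallocates transmissions from the slow degree-2 stage into the fast degree-1 stage, which strictly increases the expected number of recoveries at every transmission count up to $k\ln 2$. To make this rigorous I would invert the generalization of (10) and show by a term-by-term comparison that $E(N(s))$ is non-increasing in $\gamma_0$ on $(0,\tfrac12 k]$.

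For the full-recovery-overhead claim, I would use the fact that among the $k\ln 2$ symbols sent before $\beta=\tfrac12$, roughly $\gamma_0 k$ are degree-1 and the remaining $k\ln 2-\gamma_0 k$ are degree-2; only the latter contribute \emph{build-up edges} in the sense of Lemma 1, since degree-1 symbols after the initial batch tend to hit already-black nodes. Combining Lemma 1 with Lemma 2, each build-up edge saves exactly one Case-1/Case-2 symbol in the completion stage, so the total overhead equals $k\ln 2+\sum_{i=k/2}^{k-1}1/P_M(i)$ minus the expected number of build-up edges. Because that number is monotonically decreasing in $\gamma_0$ (fewer degree-2 symbols are transmitted when $\gamma_0$ is larger), the total overhead is monotonically increasing in $\gamma_0$ on $[0,\tfrac12]$.

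The main obstacle I anticipate is the middle step: extending Lemma 2, which was stated for $\beta_0=\tfrac12$ with effectively $\gamma_0=0$, to a mixed regime where the degree-2 stage is grown on a vertex set in which $\gamma_0 k$ nodes are already recovered. The build-up-edge fraction $\tfrac14 c_0$ implicitly assumes the degree-2 random graph starts on an empty vertex set, whereas for $\gamma_0>0$ the degree-1 and degree-2 stages are coupled through the black-node set, so I would need either to argue that this coupling is negligible in expectation (because randomly pre-blackened nodes perturb the Erd\H{o}s--R\'enyi-style giant component only in a controlled way), or to establish monotonicity directly via a coupling between the $\gamma_0'$ and $\gamma_0''$ processes that avoids a closed-form evaluation.
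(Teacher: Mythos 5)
Your proposal follows essentially the same route as the paper: it uses Theorem~6 ($E(N_{\frac12k})=k\ln 2$ for every $\gamma_0$) as the pivot, attributes the intermediate-performance gain to replacing slow pre-phase-transition degree-2 symbols with immediately decodable degree-1 symbols, and attributes the overhead increase to the loss of Build-Up edges in the completion phase (the comparison of the $(1-\frac14 c_0)$ factor in (8) against (13), quantified for general $\gamma_0$ by the $N_B$ expression in Theorem~7). The paper's own justification is the informal discussion preceding the corollary rather than a formal monotonicity argument, so the coupling difficulty you flag is real but is simply absorbed there by the averaged-probability approximation $P_u$.
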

\begin{corollary}
The value of $\gamma_0$ should not be larger than 0.5.
\end{corollary}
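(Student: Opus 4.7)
The plan is to prove Corollary 3 by showing that any choice of $\gamma_0 \in (0.5, 1]$ is weakly dominated by $\gamma_0 = 0.5$ in the intermediate regime and strictly dominated in the full recovery overhead. The natural decomposition suggested by Theorem 6 is to split the analysis at $\beta = 0.5$ and treat the two sub-regimes separately.

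First I would fix an arbitrary $\gamma_0 \in (0.5, 1]$ and examine the regime $\beta \leq 0.5$. Since the OFCNB encoder, by construction, emits only degree-1 coded symbols until $\gamma_0 k$ source symbols are recovered, and since $\gamma_0 k \geq 0.5 k$, the entire segment of transmissions up to $\beta = 0.5$ consists exclusively of degree-1 symbols regardless of whether $\gamma_0 = 0.5$ or $\gamma_0 > 0.5$. Therefore Theorem 4 controls the recovery trajectory $s \mapsto E(N(s))$ on $s \leq 0.5k$ identically for every such $\gamma_0$, and Theorem 6 confirms that the total number of coded symbols required to reach $\beta = 0.5$ is exactly $k\ln 2$ in both cases. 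This establishes that no intermediate-performance gain over $\gamma_0 = 0.5$ can be obtained by taking $\gamma_0 > 0.5$.

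Second, I would compare the overhead in the regime $\beta > 0.5$. For $\gamma_0 = 0.5$ the encoder switches immediately to the optimal degree $\widehat{m}$ defined in (1), so each transmitted coded symbol is \emph{useful} with the maximum probability $P_M(i)$ depicted in Fig. \ref{pmax}. For $\gamma_0 > 0.5$, however, the encoder continues to emit degree-1 symbols throughout $\beta \in (0.5, \gamma_0]$, and in that sub-regime Fig. \ref{pmax} shows that the maximizer of $P_1(m,\beta) + P_2(m,\beta)$ is strictly greater than one. Hence each such degree-1 transmission has a strictly smaller probability of being useful than the optimal choice would provide, and the penalty $-k\ln(1-\gamma_0)$ appearing in the second branch of (14) quantifies the resulting loss. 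A direct differentiation (or term-by-term comparison) of that expression with respect to $\gamma_0$ on $(0.5, 1]$ shows that the total overhead is strictly increasing in $\gamma_0$ in this range.

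The main obstacle will be aligning the two remarks rigorously at the boundary $\gamma_0 = 0.5$, because Remark 1 is proved only in the limit $\gamma_0 \to 0$ while Remark 2 covers $\gamma_0 \geq 0.5$. I would handle this by verifying that the derivation behind Remark 2, when specialized to $\gamma_0 = 0.5$, coincides with the $\beta > 0.5$ branch of (10) evaluated in the limit $\gamma_0 \to 0$ after using Theorem 6 to identify $E(N_{k/2}) = k\ln 2$ in both derivations. With that matching in place, the strict monotonicity of the overhead on $(0.5, 1]$ combined with the invariance established in the first regime gives exactly Corollary 3: raising $\gamma_0$ beyond $0.5$ incurs strictly greater overhead without improving intermediate performance, so it should be avoided.
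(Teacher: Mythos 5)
Your proposal is correct and follows essentially the same route as the paper: it splits the analysis at $\beta=0.5$, uses Theorem 6 to show that all $\gamma_0\geq 0.5$ yield identical behavior (and the same $k\ln 2$ cost) up to the half-recovery point, and then argues that continuing to send degree-1 symbols for $\beta>0.5$ is strictly suboptimal because the probability that such a symbol is useful falls below that of the optimal degree $\widehat{m}$, inflating the full recovery overhead. The extra care you take in matching Remarks 1 and 2 at the boundary is a reasonable refinement but not a departure from the paper's argument.
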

\subsection{Performance Analysis for General Situations}
For any $0<\gamma_0<0.5$ that does not belong to the two special cases, when $0<\beta<\gamma_0$, degree-1 coded symbols are generated, which can be analyzed by Theorem 4. When $\gamma_0<\beta<0.5$, degree-2 coded symbols are generated. It is considered as the first stage of the proposed completion phase. At last, when $0.5\leq \beta \leq 1$, degree-$\hat m$ coded symbols are generated. It is considered as the second stage of the proposed completion phase. First we need to explore how many degree-2 coded symbols become Build-Up edges, which are helpful for the second stage of the completion phase.
\begin{Theorem}
For any $0<\gamma_0<0.5$, denote the number of Build-Up edges introduced by degree-2 coded symbols as $N_B$. $N_B$ can be expressed as
\begin{equation}
N_B=\ln(2-2\gamma_0)kP_u-(\frac12-\gamma_0)k,
\end{equation}
where
\begin{equation}
P_u=\frac12(P_M(\gamma_0k)+P_M(\frac12k)).
\end{equation}
\end{Theorem}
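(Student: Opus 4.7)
My plan is to count degree-2 symbols in stage 1 two different ways and extract $N_B$ as the residual. The total number of useful degree-2 symbols in stage 1 decomposes into \emph{Case-1 symbols} (one black and one white endpoint, which directly recover the white endpoint) plus \emph{Build-Up edges} (Case-2 symbols whose edge lives in a small component, i.e.\ the quantity $N_B$). If I can independently pin down the total number of degree-2 symbols $N_2^{\mathrm{tot}}$ transmitted in stage 1, the useful fraction $P_u$, and the Case-1 count, the decomposition immediately yields $N_B$.

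First I would extract $N_2^{\mathrm{tot}}$ by combining Theorem 6 and Theorem 4. Theorem 6 says that $k\ln 2$ coded symbols are required to reach $\beta=\tfrac{1}{2}$, regardless of $\gamma_0$. Theorem 4, applied at $s_1=\gamma_0 k$, says the preceding degree-1 phase consumes $k\ln\!\frac{1}{1-\gamma_0}=-k\ln(1-\gamma_0)$ symbols. Subtraction then gives
\[
N_2^{\mathrm{tot}} \;=\; k\ln 2 + k\ln(1-\gamma_0) \;=\; k\ln(2-2\gamma_0),
\]
which agrees with Theorem 2 as $\gamma_0\to 0$ and vanishes as $\gamma_0\to\tfrac{1}{2}$. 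Second, since a degree-2 symbol is useful with probability $P_M(\beta k)$ and $\beta$ sweeps from $\gamma_0$ to $\tfrac{1}{2}$ during stage 1, I would approximate the running average by the trapezoidal endpoints formula $P_u=\tfrac{1}{2}[P_M(\gamma_0 k)+P_M(\tfrac{1}{2}k)]$, so that the expected number of useful symbols in stage 1 is $N_2^{\mathrm{tot}}P_u=k\ln(2-2\gamma_0)P_u$. Third, stage 1 contributes $(\tfrac{1}{2}-\gamma_0)k$ new recoveries, and by the Lemma~1 accounting (one useful symbol credited per recovery) the Case-1 count equals $(\tfrac{1}{2}-\gamma_0)k$. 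Subtracting gives $N_B=k\ln(2-2\gamma_0)P_u-(\tfrac{1}{2}-\gamma_0)k$, which is the claim.

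The main obstacle is justifying the trapezoidal estimate for $P_u$. A fully rigorous calculation would integrate $P_M(\beta k)$ against the empirical distribution of $\beta$ during stage 1, which is nonuniform because $\beta$ advances slowly while the giant component is still growing and then jumps sharply once the giant is absorbed by the initial $\gamma_0 k$ black nodes; replacing that integral by an endpoints-only average is heuristic and most naturally validated by comparison with simulation rather than further analysis. A secondary caveat is that equating the Case-1 count with $(\tfrac{1}{2}-\gamma_0)k$ treats each Case-1 symbol as recovering exactly one node, folding any multi-node cascade triggered by a single Case-1 into $N_B$; this is consistent with the Lemma~1 convention but should be flagged so that the reader understands the interpretation of $N_B$ being computed.
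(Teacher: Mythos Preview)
Your proof is essentially identical to the paper's: both compute the total number of degree-2 symbols as $k\ln 2 - \bigl[-k\ln(1-\gamma_0)\bigr]=k\ln(2-2\gamma_0)$ via Theorems~4 and~6, approximate the useful fraction by the endpoint average $P_u$, and obtain $N_B$ by subtracting the $(\tfrac12-\gamma_0)k$ useful symbols credited to stage-1 recoveries under the Lemma~1 accounting. Your explicit flagging of the trapezoidal approximation and of the Lemma~1 crediting convention is, if anything, more careful than the paper's own treatment, which simply states these steps without caveat.
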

\begin{proof}
For recovering $\frac12k$ source symbols with a given $0<\gamma_0<0.5$, from Theorem 4 we know $-k\ln(1-\gamma_0)$ degree-1 coded symbols are needed, and from Theorem 6 we have $E(N_{\frac12k})=k\ln(2)$. Thus the number of degree-2 coded symbols generated for recovering $\frac12k$ source symbols is
\[
k\ln(2)+k\ln(1-\gamma_0)=\ln(2-2\gamma_0)k.
\]
The exact probability $P_u$ that a generated degree-2 coded symbol is a \emph{useful} symbol is difficult to calculate because it changes with the recovery rate. Hence we use the averaged value of the maximum value of $P_u$ when $\beta=\gamma_0$ and the minimum value of $P_u$ when $\beta=0.5$ as an approximation. Therefore, when $\ln(2-2\gamma_0)k$ degree-2 coded symbols are generated, $\ln(2-2\gamma_0)kP_u$ coded symbols belong to Case-1 or Case-2 and are not discarded, which will contribute to the recovery of the rest $(\frac12-\gamma_0)k$ source symbols in the first stage, or contribute to the recovery in the second stage as Build-Up edges. It is clear that $(\frac12-\gamma_0)k$ Case-1 or Case-2 completion symbols are used for recovering $(\frac12-\gamma_0)k$ source symbols, and the rest are Build-Up edges.
\end{proof}

Then we can analyze the second stage of the completion phase.
\begin{Theorem}
For any $0<\gamma_0<0.5$, denote the number of required coded symbols for recovering $s_3$ source symbols at the second stage of the completion phase as $N_{general3}(s_3)$. The expectation of $N_{general3}(s_3)$ can be expressed as follows:
\begin{equation}
E(N_{general3}(s_3))=(1-\frac{2N_B}{k})\cdot\sum_{i=\frac12k}^{s_3+\frac12k-1}{\frac{1}{P_M(i)}}.
\end{equation}
\end{Theorem}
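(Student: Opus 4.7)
The plan is to mimic the induction argument used in Theorem 3, but with the coefficient $(1-\tfrac14 c_0)$ replaced by the analogous fraction of recoveries at the second stage that are not absorbed by Build-Up edges already stored from the first stage. First I would recall from Lemma 1 that only three types of coded symbols are \emph{useful} at the completion phase, and that by Lemma 2 the recovery of one source symbol consumes, on average, exactly one useful symbol. I would then invoke Theorem 7 to claim that when the second stage begins at $\beta=\tfrac12$, there are already $N_B$ Build-Up edges stored in the decoding graph.

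Next I would account for how the $\tfrac12 k$ recoveries of the second stage are partitioned among the three types of useful symbols. Since the $N_B$ Build-Up edges contribute to recoveries for free (each one will turn an additional small component black the moment its component is absorbed into the black giant component by a Case-1 or Case-2 completion symbol), of the $\tfrac12 k$ recoveries to be achieved in the second stage, $N_B$ are supplied by pre-existing Build-Up edges and the remaining $\tfrac12 k - N_B$ must be supplied by Case-1 or Case-2 completion symbols. Hence the expected number of Case-1/Case-2 completion symbols needed per recovered source symbol in the second stage is
\begin{equation*}
\frac{\tfrac12 k - N_B}{\tfrac12 k} \;=\; 1-\frac{2N_B}{k}.
\end{equation*}

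With this fraction in hand, the remainder of the argument is essentially that of Theorem 3. When $s_0$ source symbols have been recovered in the second stage, the total number recovered is $s_0+\tfrac12 k$, so by (9) the probability that a transmitted coded symbol is a Case-1 or Case-2 completion symbol is $P_M(s_0+\tfrac12 k)$. Therefore recovering the next source symbol requires, in expectation, $(1-\tfrac{2N_B}{k})/P_M(s_0+\tfrac12 k)$ transmitted coded symbols. Summing over $s_0=0,1,\dots,s_3-1$ and using linearity of expectation yields (18).

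The main obstacle is justifying that the $N_B$ Build-Up edges can be treated as uniformly spread across the $\tfrac12 k$ recoveries of the second stage, so that the constant ratio $1-\tfrac{2N_B}{k}$ may be pulled outside the summation; in reality these edges may be absorbed in bursts whenever their host components become connected to the growing black giant component. I would handle this by arguing, as in \cite{Me2018IOFC}, that each incoming Case-1 or Case-2 completion symbol is equally likely to trigger any one of the residual small components, so on average the fraction $N_B/(\tfrac12 k)$ of recoveries at every step of the second stage comes from Build-Up edges, which validates pulling the factor outside the sum and matches the form of (18).
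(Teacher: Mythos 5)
Your proposal is correct and follows essentially the same route as the paper: it uses Lemma 1 together with the $N_B$ Build-Up edges from Theorem 7 to conclude that only $1-\frac{2N_B}{k}$ Case-1/Case-2 completion symbols are needed per recovered source symbol in the second stage, and then repeats the induction over $P_M(s_0+\frac12 k)$ from Theorem 3. Your closing remark about the Build-Up edges being treated as uniformly spread over the $\frac12 k$ recoveries is a fair point that the paper itself only handles with the phrase ``on average,'' so you have if anything been slightly more careful than the original.
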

\begin{proof}
From Lemma 1, in order to recover one source symbol at the second stage of the completion phase, one \emph{useful} symbol is required on average. On the other hand, there are $N_B$ Build-Up edges that can be used to recover $\frac12k$ source symbols. Thus there are $\frac{2N_B}{k}$ Build-Up edges for each source symbol on average. The number of Case-1 and Case-2 completion symbols required for recovering one source symbol is $1-\frac{2N_B}{k}$. Similar to the proof of Theorem 3 and Theorem 5, by using induction we can obtain Theorem 8.
\end{proof}

Lastly we tackle the first stage of the proposed completion phase. We already know that in this stage $\ln(2-2\gamma_0)k$ degree-2 coded symbols are generated, and $(\frac12-\gamma_0)k$ source symbols are recovered. Therefore, on average,
\begin{equation}
\frac{(\frac12-\gamma_0)k}{\ln(2-2\gamma_0)k}=\frac{\frac12-\gamma_0}{\ln(2-2\gamma_0)}
\end{equation}
degree-2 coded symbols are required to recover one source symbol. We use this as an approximation, and obtain the following corollary.
\begin{corollary}
For any $0<\gamma_0<0.5$, denote the number of required coded symbols for recovering $s_2$ source symbols at the first stage of the completion phase as $N_{general2}(s_2)$. The expectation of $N_{general2}(s_2)$ can be expressed as
\begin{equation}
E(N_{general2}(s_2))=\frac{s_2\cdot \ln(2-2\gamma_0)}{\frac12-\gamma_0}.
\end{equation}
\end{corollary}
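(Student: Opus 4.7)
The plan is to turn the averaging argument sketched in the discussion just above the corollary into a formal (approximate) derivation. From Theorem 7 we already know that on the entire first stage of the completion phase the transmitter generates exactly $\ln(2-2\gamma_0)k$ degree-2 coded symbols, and by construction this stage begins when $\gamma_0 k$ source symbols are recovered and ends when $\frac12 k$ are recovered, so a total of $(\frac12-\gamma_0)k$ source symbols are recovered during it. These two aggregate counts are the only inputs I need.

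First, I would compute the average cost per recovery over the stage. Dividing the total number of degree-2 coded symbols by the total number of source symbols recovered yields the ratio in (19), i.e., $\frac{\ln(2-2\gamma_0)}{\frac12-\gamma_0}$ degree-2 coded symbols per source symbol recovered, averaged uniformly over the stage. Next, I would extend this to a partial recovery of $s_2$ source symbols by linearity: treating the per-symbol cost as constant across the stage, the expected number of coded symbols needed to recover the first $s_2$ of the $(\frac12-\gamma_0)k$ source symbols is simply $s_2$ times this per-symbol average, which is exactly the expression in (20). No induction on $s_2$ is required once the linearity approximation is adopted, since the statement reduces to multiplying a fixed rate by the number of desired recoveries.

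The main obstacle, which is the reason this result is stated as a corollary with an approximation rather than an exact theorem, is that the per-recovery cost is not truly uniform across the first stage. Near the beginning, when $\beta$ is just above $\gamma_0$, the giant component has not yet fully absorbed enough white nodes, so many degree-2 coded symbols merely enlarge small components without triggering recoveries; near $\beta = \tfrac12$, in contrast, a single degree-2 symbol can blacken many nodes through the giant component. A rigorous treatment would have to revisit the random-graph evolution $\alpha + e^{-c\alpha} = 1$ used in the proof of Theorem 2 and integrate the instantaneous recovery rate against $c$, whereas the corollary simply replaces this integral by its endpoint average. For the purpose of the paper this is acceptable because (i) the aggregate counts at the two endpoints $\beta=\gamma_0$ and $\beta=\tfrac12$ are known exactly from Theorems 4 and 6, pinning down the curve, and (ii) the approximation is validated empirically in Section V.
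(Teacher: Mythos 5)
Your proposal matches the paper's own derivation: the paper likewise takes the aggregate count of $\ln(2-2\gamma_0)k$ degree-2 coded symbols generated in the first stage (obtained from Theorems 4 and 6 in the proof of Theorem 7) together with the $(\frac12-\gamma_0)k$ recoveries, forms the endpoint-averaged per-symbol cost, and extends it linearly in $s_2$ as an explicit approximation. Your added remarks on why the per-recovery cost is not actually uniform across the stage are a fair elaboration of the same idea, not a different route.
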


Combing Theorem 4, Theorem 8 and Corollary 4, we can obtain analysis for the whole encoding process.
\begin{Remark}
For any $0<\gamma_0<0.5$, denote the number of coded symbols required for recovering $s$ source symbols as $N_{g}(s)$, the expectation of $N_{g}(s)$ is expressed as follows:
\begin{align}
E(N_{g}(s))=
&\left\{
\begin{aligned}
&k\ln(\frac{k}{k-s}),
\text{when } 0<s\leq \gamma_0k;\\
&\frac{(s-\gamma_0k)\cdot \ln(2-2\gamma_0)}{\frac12-\gamma_0}-k\ln(1-\gamma_0),
\text{when } \gamma_0k<s\leq \frac12k;\\
&k\ln(2)+(1-\frac{2N_B}{k})\cdot\sum_{i=\frac12k}^{s-1}{\frac{1}{P_M(i)}},
\text{when } \frac12k<s \leq k.
\end{aligned}
\right.
\end{align}
\end{Remark}

Note that for simplicity, the theoretical analysis is derived under lossless channels. However, it is straightforward to extend the results to lossy channels. Because the source symbols are selected uniformly and randomly, the source symbol degree distribution is Poisson distribution, and it remains Poisson distribution after random loss. Thus denote the channel erasure rate $\epsilon$, the aforementioned analysis results just need to be extra divided by $1-\epsilon$ under lossy channels.
\section{Systematic online fountain codes}
In this section, we propose systematic online fountain codes (SOFC) motivated by analysis in Section III. We then analyze the performance of the proposed SOFC, and compare it with the performance of OFC in \cite{cassuto2015online}.

\subsection{Motivation and the Proposed Coding Scheme}
OFCNB can trade-off performance between intermediate performance and full recovery overhead.
However, affected by the \emph{random selection} rule of source symbols, the performance of OFCNB is still not optimal, and can be improved. When $\gamma_0$ is small, as we mentioned in Theorem 2, the random selection results in phase transition \cite{alon2004probabilistic}, before which a large number of small components are built and they will not turn black immediately. Thus the improvement on intermediate performance is marginal. When $\gamma_0$ is large, as we mentioned in Theorem 4, since randomly selected degree-1 coded symbols are more likely to correspond to duplicate source symbols, the full recovery overhead is larger than OFC.

%However, when $\gamma_0$ is large, the full recovery overhead is larger than OFC, {\color{red}science the randomly selected degree-1 coded symbols are more likely to correspond to duplicate source symbols.} When $\gamma_0$ is small, although the intermediate performance is slightly better than OFC with the same full recovery overhead, the intermediate performance is still not satisfying because of phase transition, {\color{red}which also results from the random selection and the Poisson degree distribution of the source symbols \cite{alon2004probabilistic}.}

Motivated by the observation from OFCNB, we further investigate the impact of random selection on the intermediate performance and the full recovery overhead. Note that good intermediate performance indicates that many source symbols are quickly recovered. Then the coded symbols generated afterwards are more likely to be the XOR of several source symbols that have been already recovered, i.e., not \emph{useful}, since the source symbols are selected randomly. This results in large full recovery overheads. For the same reason, if we generate more \emph{useful} coded symbols that are randomly selected in order to reduce full recovery overhead, the recovery rate remains to be low for a long time. This degrades the intermediate performance. Therefore, we consider the random selection of source symbols as the root of the trade-off between the intermediate performance and the full recovery overhead, and it prevents us from improving both these two performance at the same time.
%Note that before phase transition, a large number of small components are built. These small components can not turn black immediately, which causes the unsatisfying intermediate performance. For the same reason, it is more likely for a {\color{red}randomly selected} coded symbol to be \emph{useful} at the beginning of OFCNB, which introduces many Build-Up edges that are helpful in the second stage of the completion phase and reduce the full recovery overhead. On the other hand, if the intermediate performance is improved, many source symbols are quickly recovered, {\color{red}then the coded symbols are more likely to be the XOR of several recovered coded symbols, i.e., not \emph{useful}, since the coded symbols are selected randomly.}

Thus we attempt to modify the selection rule of source symbols to reduce the effect of random selection. For the firstly transmitted degree-1 coded symbols, we propose to select source symbols only once. Hence there will be no redundant coded symbols. Since the degree-1 coded symbols are always \emph{useful} now, they shall be generated as many as possible. Therefore we propose the systematic coding scheme as follows.
\\\textbf{Systematic phase}: The encoder selects and transmits source symbols one-by-one, with successive indexes, as a degree-1 coded symbol, until all source symbols are selected once.
\\\textbf{Completion phase}: The decoder calculates the recovery rate $\beta$ and feeds it back to the encoder. The encoder calculates the optimal coded symbols degree $\hat m$ based on (\ref{degree}), and randomly selects $\hat m$ source symbols to generate a coded symbol. The rest of the completion phase is the same as Section II-B, and continues until all source symbols are recovered.

\subsection{Performance Analysis and Comparison}
Now we analyze the performance of SOFC. Since the selection rule is not random, the performance of SOFC is affected by the channel erasure rate $\epsilon$. We start with the analysis in lossy channels.
\begin{Theorem}
For SOFC, in the systematic phase, denote the number of coded symbols required for recovering $s_1$ source symbols as $N_{sys1}(s_1)$. The expectation of $N_{sys1}(s_1)$ can be expressed as
\begin{equation}
E(N_{sys1}(s_1))=\frac{s_1}{1-\epsilon}.
\end{equation}
\end{Theorem}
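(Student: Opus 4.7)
The plan is to recognize that the systematic phase is essentially a sequence of independent erasure trials, one per source symbol, so that the theorem reduces to the expected waiting time for $s_1$ successes in a Bernoulli process with success probability $1-\epsilon$. First I would observe that by construction every transmitted degree-1 coded symbol in the systematic phase corresponds to a distinct source index (the encoder sweeps through the source symbols with successive indices, each exactly once), so any symbol that is not erased immediately produces a Case-1 event at the decoder and colors a previously white source node black. There are no duplicate degree-1 transmissions to discard, hence the number of recovered source symbols equals the number of non-erased receptions.

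Next I would model the receptions as a Bernoulli process: the $i$-th transmission is successfully received with probability $1-\epsilon$, independently across $i$. The number of transmissions $N_{sys1}(s_1)$ until $s_1$ successful receptions have occurred is then a negative binomial random variable, expressible as $N_{sys1}(s_1) = \sum_{i=1}^{s_1} G_i$, where each $G_i$ is an independent geometric random variable with mean $1/(1-\epsilon)$. By linearity of expectation, $E(N_{sys1}(s_1)) = s_1/(1-\epsilon)$, which gives the claimed identity.

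There is essentially no obstacle here; the argument is a direct consequence of the non-repeating structure of the systematic phase. The more informative take-away is the contrast with Theorem 4: in OFCNB the random selection of degree-1 coded symbols causes duplicates, inflating the expected count from $s_1/(1-\epsilon)$ up to $(k/(1-\epsilon))\ln(k/(k-s_1))$, and it is precisely this source of overhead that the systematic modification removes, thereby justifying the subsequent comparison of SOFC with OFC.
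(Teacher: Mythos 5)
Your proposal is correct and matches the paper's argument: the paper's proof is the one-line observation that each successfully received systematic degree-1 symbol recovers exactly one (distinct) source symbol, and your negative-binomial/linearity-of-expectation derivation simply makes that same waiting-time computation explicit. No gap; you have just written out the details the paper leaves implicit.
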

\begin{proof}
Note that in the systematic phase, once a degree-1 coded symbol is successfully received, it can recover one source symbol.
\end{proof}

$\epsilon$ has a greater impact on the completion phase. First we discuss the situation where $0\leq\epsilon \leq 0.5$, since the other two situations, i.e., $0.5<\epsilon <1$ and $\epsilon\to 1$, are similar to OFCNB. If $0\leq\epsilon \leq 0.5$, we have $1-\epsilon\geq0.5$, thus degree-2 coded symbols will not be generated in the completion phase.
\begin{Theorem}
For SOFC, in the completion phase, denote the number of required coded symbols for recovering $s_2$ source symbols as $N_{sys2}(s_2)$. When $0\leq\epsilon \leq 0.5$, the expectation of $N_{sys2}(s_2)$ can be expressed as
\begin{equation}
E(N_{sys2}(s_2))=\frac{1}{1-\epsilon}\cdot\sum^{s+(1-\epsilon)k-1}_{i=(1-\epsilon)k}\frac{1}{P_M(i)},
\end{equation}
where $P_M(i)$ is obtained in (9).
\end{Theorem}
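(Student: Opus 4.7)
The plan is to leverage the same accounting framework that produced Theorems 3 and 5, but extended to account for channel erasures. First I would use Theorem 9 to establish the initial state of the completion phase: after the systematic phase transmits each of the $k$ source symbols exactly once as a degree-1 coded symbol, the decoder has recovered, in expectation, $(1-\epsilon)k$ distinct source symbols (distinctness follows because the systematic phase never repeats a source symbol, so no two received degree-1 symbols can collide). Hence the completion phase begins at instantaneous recovery rate $\beta_0 = 1-\epsilon \geq 0.5$.

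Next I would argue that throughout the completion phase the optimal degree satisfies $\hat m \geq 3$, so that no degree-2 coded symbols are ever transmitted. Because $\beta$ is monotonically non-decreasing during the completion phase and starts at $\beta_0 \geq 0.5$, and because Fig.~\ref{pmax} shows that $\argmax_m [P_1(m,\beta)+P_2(m,\beta)]$ leaves the value $2$ as soon as $\beta$ crosses $0.5$, the coded symbols generated here consist only of degree-$\hat m$ symbols with $\hat m \geq 3$. In particular, no Build-Up edges are introduced after the systematic phase. By Lemma 1, the only useful symbols remaining are Case-1 and Case-2 completion symbols, each of which contributes on average to the recovery of exactly one source symbol.

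Then I would set up the per-step expectation. When the decoder has recovered $i$ source symbols in total, a single transmitted coded symbol is \emph{useful at the decoder} iff it survives the channel (probability $1-\epsilon$) and it falls into Case-1 or Case-2 (conditional probability $P_M(i)$). These two events are independent because the erasure is memoryless with respect to the coded-symbol structure, so the per-transmission success probability factorizes as $(1-\epsilon)P_M(i)$. Therefore the expected number of transmissions needed to move from $i$ to $i+1$ recovered source symbols is $1 / [(1-\epsilon)P_M(i)]$. Summing this from $i = (1-\epsilon)k$ up to $i = s_2 + (1-\epsilon)k - 1$ via the same induction argument as in the proofs of Theorems 3 and 5 yields (23).

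The main obstacle will be the claim that $\hat m \neq 2$ throughout, i.e., that no Build-Up edges contaminate the clean "one useful symbol per recovery" accounting. This justification follows from the monotone growth of $\beta$ combined with the cross-over behavior of $P_1(m,\beta)+P_2(m,\beta)$ visible in Fig.~\ref{pmax}; a small subtlety is that the decoder updates $\hat m$ only when a useful coded symbol is received, so one must verify that between updates the per-transmission success probability still retains the clean product form $(1-\epsilon)P_M(i)$ at each fixed recovery state $i$, which it does since both the erasure process and the random source-symbol selection are independent of the feedback schedule.
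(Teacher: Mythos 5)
Your proposal is correct and follows essentially the same route as the paper: the systematic phase leaves $(1-\epsilon)k$ recovered symbols and no Build-Up edges (all systematic symbols are Case-1, and degree-2 symbols are never generated since $\beta\geq 1-\epsilon\geq 0.5$), so Lemma 1 reduces the completion phase to one Case-1/Case-2 symbol per recovery, with per-transmission success probability $(1-\epsilon)P_M(i)$ summed by induction. Your added remarks on the independence of the erasure process from the coded-symbol structure and the feedback schedule are harmless elaborations of what the paper leaves implicit.
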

\begin{proof}
When the systematic phase ends, $k$ degree-1 coded symbols are generated and transmitted, $(1-\epsilon)k$ coded symbols are received, and $(1-\epsilon)k$ source symbols are recovered. Because all coded symbols in this phase belong to Case-1, there are no Build-Up edges introduced. Thus from Lemma 1, in the completion phase, recovering one source symbol requires one Case-1 or Case-2 completion symbol on average. When $(1-\epsilon)k$ source symbols are recovered, the probability that a coded symbol is a Case-1 or Case-2 completion symbol can be expressed as $P_M((1-\epsilon)k)$. By using induction we can obtain Theorem 10.
\end{proof}

When $0.5<\epsilon<1$, the performance is similar to OFCNB with $0<\gamma_0<0.5$, the completion phase can be divided into two stages, and degree-2 coded symbols are generated in the first stage and the Build-Up edges will be introduced. When $\epsilon \to 1$,  the first stage of the completion phase is similar to OFCNB with $\gamma_0\to 0$. Denote the number of coded symbols required for recovering $s_2$ source symbols at the first stage of the completion phase as $N_{sys2}(s_2)$, and the number of coded symbols required for recovering $s_3$ source symbols at the second stage of the completion phase as $N_{sys3}(s_3)$, we derive the following two corollaries.
\begin{corollary}
When $0.5<\epsilon<1$, the expectation of $N_{sys2}(s_2)$ can be expressed as
\begin{equation}
E(N_{sys2}(s_2))=\frac{s_2\cdot\ln(2\epsilon)}{(\epsilon-\frac12)(1-\epsilon)},
\end{equation}
and the expectation of $N_{sys3}(s_3)$ can be expressed as
\begin{equation}
E(N_{sys3}(s_3))=\frac{k-2N_{B\epsilon}}{k(1-\epsilon)}\cdot\sum_{i=\frac12k}^{s_2+(1-\epsilon)k-1}{\frac{1}{P_M(i)}},
\end{equation}
where
\begin{equation}
N_{B\epsilon}=\ln(2\epsilon)kP_{u\epsilon}-(\epsilon-\frac12)k,
\end{equation}
and
\begin{equation}
P_{u\epsilon}=\frac{P_M((1-\epsilon)k)+P_M(\frac12k)}{2}.
\end{equation}
\end{corollary}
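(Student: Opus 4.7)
The plan is to recognize that SOFC immediately after its systematic phase is statistically equivalent to OFCNB immediately after its degree-$1$ phase with $\gamma_0 = 1-\epsilon$. In both settings the decoding graph contains a uniformly random subset of $(1-\epsilon)k$ black nodes, no edges, and the subsequent encoding rule is identical. Since $0.5 < \epsilon < 1$ yields $0 < 1-\epsilon < 0.5$, we are squarely in the regime already covered by Theorem~7, Theorem~8, and Corollary~4. The remaining work is to (i) transport those three results under the substitution $\gamma_0 \mapsto 1-\epsilon$, and (ii) multiply transmission counts by $1/(1-\epsilon)$ to compensate for erasures in the completion phase, using the lossy-channel extension noted at the end of Section~III.

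For the first-stage expression I would apply Corollary~4 with $\gamma_0$ replaced by $1-\epsilon$. That yields $s_2 \ln(2\epsilon)/(\epsilon - \frac{1}{2})$ degree-$2$ coded symbols to recover $s_2$ source symbols in a lossless setting; dividing by $1-\epsilon$ for erasures produces the claimed expression in (23). For $N_{B\epsilon}$ and $P_{u\epsilon}$ I would plug $\gamma_0 = 1-\epsilon$ directly into Theorem~7, which immediately gives $N_B = \ln(2\epsilon) k P_u - (\epsilon - \frac{1}{2}) k$ and $P_u = \frac{1}{2}\bigl(P_M((1-\epsilon)k) + P_M(\frac{1}{2}k)\bigr)$, matching (25) and (26) verbatim. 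Because $N_{B\epsilon}$ counts graph edges rather than channel transmissions, no extra $1/(1-\epsilon)$ factor appears in these two quantities.

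For the second-stage expression (24) I would start from Theorem~8: once $k/2$ symbols are recovered and $N_B$ Build-Up edges have accumulated during the first stage, on average $1 - 2N_B/k$ Case-1 or Case-2 completion symbols are required per newly recovered source symbol. Replacing $N_B$ by $N_{B\epsilon}$ and using per-transmission success probability $1/P_M(i)$ at each cumulative recovery index $i$ gives the lossless count $(1 - 2N_{B\epsilon}/k)\sum_{i=k/2}^{\cdot} 1/P_M(i)$; dividing by $1-\epsilon$ and folding the two prefactors produces the single coefficient $(k - 2N_{B\epsilon})/(k(1-\epsilon))$ stated in (24). The precise summation bound is inherited by the same induction argument used to prove Theorem~3 and Theorem~8, applied here to the cumulative number of recovered source symbols starting from $k/2$.

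The main obstacle is justifying the statistical equivalence asserted at the outset. In OFCNB with $\gamma_0 = 1-\epsilon$, the $(1-\epsilon)k$ recovered source symbols arise from random selection by degree-$1$ coded symbols over a lossless channel, whereas in SOFC they arise as the complement of the channel's erasure pattern applied to a deterministic systematic sweep. The two distributions agree because the erasure process is i.i.d.\ across distinct source indices, so conditioning on the survivor count produces a uniformly random $(1-\epsilon)k$-subset in either case, and the edge set is empty in either case. Crucially, the duplicate-selection penalty that degrades the OFCNB analysis for large $\gamma_0$ does not arise here, because the systematic sweep visits each index exactly once---precisely the motivation articulated for SOFC in Section~IV-A. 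Once this equivalence is established, the rest of the derivation reduces to routine substitution into the OFCNB formulas.
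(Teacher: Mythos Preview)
Your proposal is correct and follows essentially the same approach as the paper: observe that after the systematic phase SOFC has recovered $(1-\epsilon)k$ source symbols just as OFCNB has recovered $\gamma_0 k$ after its degree-$1$ phase, then substitute $\gamma_0 \mapsto 1-\epsilon$ into Theorem~7, Theorem~8, and Corollary~4 and append the $1/(1-\epsilon)$ factor for the lossy channel. Your write-up is considerably more careful than the paper's own one-sentence proof, particularly in distinguishing which quantities get the erasure factor and in justifying the statistical equivalence of the two recovered-subset distributions.
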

\begin{corollary}
When $\epsilon\to 1$, the expectation of $N_{sys2}(s_2)$ can be expressed as
\begin{equation}
E(N_{sys2}(s_2))=-\frac{k^2\ln(1-\frac{s_2}{k})}{2s_2(1-\epsilon)},
\end{equation}
and the expectation of $N_{sys3}(s_3)$ can be expressed as
\begin{equation}
E(N_{sys3}(s_3))=\frac{1-\frac 14c_0}{1-\epsilon}\cdot \sum_{i=\frac12k}^{s_3+\frac12k-1}{\frac{1}{P_M(i)}}.
\end{equation}
\end{corollary}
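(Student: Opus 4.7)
The plan is to reduce Corollary 6 to the two corresponding results for OFCNB with $\gamma_0\to 0$ (Theorems 2 and 3), adjusted for the channel erasure rate. The key observation I would open with is this: when $\epsilon\to 1$, the systematic phase succeeds in recovering only $(1-\epsilon)k$ source symbols, which is negligible compared to $k$. Therefore, entering the completion phase, the decoding graph is essentially indistinguishable from the starting graph of the OFCNB completion phase with $\gamma_0\to 0$: almost no black nodes, no Build-Up edges yet, and (since $1-\epsilon<0.5$) the encoder will choose optimal degree $\widehat m=2$ for all $\beta\le\tfrac12$. This justifies importing the two-stage decomposition of the OFCNB completion phase verbatim.

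For the first identity, I would invoke Theorem 2, which gives the expected number of degree-2 coded symbols that have to be \emph{received} to recover $s_2$ source symbols, namely $-k^2\ln(1-s_2/k)/(2s_2)$. Since each transmitted coded symbol is received with probability $1-\epsilon$ independently, the expected number of transmissions needed to realize $M$ receptions is $M/(1-\epsilon)$ by linearity (or a geometric-sum argument over each transmission attempt). Applying this rescaling to Theorem 2 yields the claimed expression for $E(N_{sys2}(s_2))$. I would emphasize that this is exactly the recipe explicitly anticipated in the remark at the end of Section III-D (``divided by $1-\epsilon$ under lossy channels'').

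For the second identity, I would similarly invoke Theorem 3. The argument from Lemma 1 and Lemma 2 carries over because, having passed through the first stage with $\beta$ growing from $\approx 0$ to $\tfrac12$ via degree-2 symbols, the expected number of Build-Up edges per recovered source symbol is still $c_0/4$ with $c_0=-\ln(1-\beta_0)/\beta_0=1.3863$ (so that $E(N_{c1,2}(n))=(1-c_0/4)n$), and the encoder's optimal-degree rule produces a Case-1 or Case-2 symbol with probability $P_M(i)$ when $i$ source symbols are recovered. Thus in the noiseless setting, one again needs $(1-c_0/4)\sum_{i=k/2}^{s_3+k/2-1} 1/P_M(i)$ receptions. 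Dividing by $1-\epsilon$ for the lossy channel gives the stated $E(N_{sys3}(s_3))$.

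The step I expect to take the most care is the approximation at the boundary between the systematic phase and the first stage of the completion phase. In particular, I need to argue rigorously that the $(1-\epsilon)k$ source symbols already recovered in the systematic phase can be absorbed into the ``$o(k)$'' error and do not bias the phase-transition behavior that underlies Theorem 2. I would do this by noting that, as $\epsilon\to 1$, the fractional size $(1-\epsilon)\to 0$ of pre-recovered nodes makes the probability that the giant component contains at least one of them tend to $1$ (exactly as in the proof of Theorem 2 where $(1-(1-\alpha)^{\gamma_0 k})\to 1$), so the giant component still turns black as soon as it forms. With that technical wrinkle settled, both formulas follow by direct substitution.
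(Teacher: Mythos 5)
Your proposal is correct and follows essentially the same route as the paper: the authors likewise reduce the $\epsilon\to 1$ case to the OFCNB completion phase with $\gamma_0\to 0$ (Theorems 2 and 3), noting that the $(1-\epsilon)k$ symbols recovered in the systematic phase play the role of $\gamma_0 k$ and become negligible, and then rescale by $1/(1-\epsilon)$ for the lossy channel as anticipated in the remark at the end of Section III. Your write-up is simply a more explicit version of the paper's one-line argument, with the boundary approximation and the transmission-versus-reception rescaling spelled out.
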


\begin{proof}
Note that in SOFC, after the systematic phase, $(1-\epsilon)k$ source symbols are recovered. While for OFCNB with $0<\gamma_0<0.5$, $\gamma_0k$ source symbols are recovered. Substitute $\gamma_0$ by $1-\epsilon$ we can obtain (24) to (27). The completion phase of SOFC is the same as OFCNB with $\gamma_0\to 0$ when $\epsilon \to 1$.
\end{proof}

Combing Theorem 9, Theorem 10, Corollary 5 and Corollary 6 we have the following remark.
\begin{Remark}
For SOFC, given the channel erasure rate $\epsilon$, denote the number of required coded symbols for recovering $s$ source symbols as $N_{sys}(s)$. When $0<\epsilon\leq 0.5$, the expectation of $N_{sys}(s)$ is
\begin{align}
E(N_{sys}(s))=
&\left\{
\begin{aligned}
&\frac{s}{1-\epsilon},&\text{when } 0<s\leq (1-\epsilon)k;\\
&k+\frac{1}{1-\epsilon}\cdot\sum^{s-1}_{i=(1-\epsilon)k}\frac{1}{P_M(i)},&\text{when } (1-\epsilon)k<s \leq k.
\end{aligned}
\right.
\end{align}
When $0.5<\epsilon<1$, we have
\begin{align}
E(N_{sys}(s))=
&\left\{
\begin{aligned}
&\frac{s}{1-\epsilon},\text{when } 0<s\leq (1-\epsilon)k;&\\
&k+\frac{(s-(1-\epsilon) k)\cdot\ln(2\epsilon)}{(\epsilon-\frac12)(1-\epsilon)},\text{when } (1-\epsilon)k<s \leq \frac12k;&\\
&k+\frac{k\ln(2\epsilon)}{1-\epsilon}+\frac{k-2N_{B\epsilon}}{k(1-\epsilon)}\cdot\sum^{s-1}_{i=(1-\epsilon)k}\frac{1}{P_M(i)},\text{when } \frac12k<s \leq k.&
\end{aligned}
\right.
\end{align}
When $\epsilon\to 1$ we have
\begin{align}
E(N_{s}(s))=
&\left\{
\begin{aligned}
&s,\text{when } 0<s \leq \gamma_0k;\\
&k-\frac{k^2\ln(1-\frac{s}{k})}{2s(1-\epsilon)},\text{when } \gamma_0k<s\leq \frac12k;\\
&k+\frac{k\ln(2)}{1-\epsilon}+\frac{1-\frac 14c_0}{1-\epsilon} \cdot \sum_{i=\frac12k}^{s-1}{\frac{1}{P_M(i)}},\text{when } \frac12k<s \leq k.
\end{aligned}
\right.
\end{align}
\end{Remark}

Now we can compare the performance of SOFC, OFCNB and OFC. Because a large number of non-duplicated degree-1 coded symbols are generated, clearly SOFC has better intermediate performance than OFC and OFCNB even with large $\gamma_0$. However, the full recovery overhead of SOFC is affected by the channel erasure rate $\epsilon$. The following theorem compares the full recovery overhead.
\begin{Theorem}
When $\epsilon>\epsilon_0$, we have $E(N_{sys}(k))\leq E(N_{OFC}(k))$, where
\begin{equation}
\epsilon_0=\frac12-\frac18c_0.
\end{equation}
\end{Theorem}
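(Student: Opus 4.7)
The plan is to compare the closed-form overhead expressions directly and isolate the value of $\epsilon$ at which the difference changes sign. First I specialize Remark~4 at $s=k$ in its $0<\epsilon\leq 0.5$ branch to obtain $E(N_{sys}(k)) = k + \frac{1}{1-\epsilon}\sum_{i=(1-\epsilon)k}^{k-1}\frac{1}{P_M(i)}$, and I extend equation~(11) to a lossy channel by dividing by $1-\epsilon$ (per the paper's convention at the end of Section~III) to obtain $E(N_{OFC}(k)) = \frac{1}{1-\epsilon}\bigl[k\ln 2 + (1-\tfrac{c_0}{4})\sum_{i=k/2}^{k-1}\frac{1}{P_M(i)}\bigr]$.

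Next I multiply the target inequality $E(N_{sys}(k)) \leq E(N_{OFC}(k))$ by $1-\epsilon > 0$ to clear denominators, and split the OFC sum at index $(1-\epsilon)k$ (valid since $1-\epsilon \geq 1/2$ in this regime). Writing $S = \sum_{i=k/2}^{k-1}\frac{1}{P_M(i)}$ and $T_\epsilon = \sum_{i=k/2}^{(1-\epsilon)k-1}\frac{1}{P_M(i)}$, the portion of the OFC sum over $[(1-\epsilon)k,k-1]$ cancels against the SOFC sum up to the common coefficient, and the comparison collapses to the single condition
\[
T_\epsilon \;-\; \tfrac{c_0}{4}\, S \;\geq\; k\bigl(1 - \epsilon - \ln 2\bigr).
\]

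To extract the threshold I rely on two observations. First, since $\beta_0 = 1/2$, the definition $c_0 = -\ln(1-\beta_0)/\beta_0$ gives the identity $c_0 = 2\ln 2$, so $\ln 2 = c_0/2$ and $c_0/4 = (\ln 2)/2$. Second, along the completion range $[k/2,k-1]$ the summand $1/P_M(i)$ can be treated as approximately uniformly distributed in $i$ for large $k$, giving $T_\epsilon \approx \frac{1/2-\epsilon}{1/2}\,S = (1-2\epsilon)\,S$. Substituting both observations into the reduced condition produces $S\bigl[1 - 2\epsilon - \tfrac{c_0}{4}\bigr] \geq k\bigl(1 - \epsilon - \tfrac{c_0}{2}\bigr)$, whose $S$-coefficient vanishes precisely at $\epsilon = \tfrac{1}{2} - \tfrac{c_0}{8} = \epsilon_0$; the residual linear-in-$\epsilon$ term then drives the sign change at $\epsilon_0$, yielding the claim.

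The main obstacle is that neither $S$ nor $T_\epsilon$ admits a closed form, so the cancellation of the $S$-terms on which the clean threshold hinges must be justified via a large-$k$ approximation or a matching bound on the ratio $T_\epsilon/S$ over the completion range. I expect most of the effort to go into this justification (bounding the approximation error as lower order in $k$); once it is in place, the identity $c_0 = 2\ln 2$ together with the split-and-cancel step makes the threshold $\epsilon_0 = \tfrac{1}{2} - \tfrac{c_0}{8}$ drop out of the algebra.
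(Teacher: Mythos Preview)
Your approach differs substantially from the paper's. You work directly with the closed forms, split the completion sum at $(1-\epsilon)k$, and try to force the threshold out of the resulting inequality via the approximation $T_\epsilon \approx (1-2\epsilon)S$. The paper instead bypasses the sums altogether with a \emph{useful-symbol counting} argument based on Lemma~1: full recovery needs exactly $k$ useful received symbols, and since the completion-phase encoding rule (hence the probability $P_M(\cdot)$ of producing a useful symbol at each $\beta$) is identical for OFC and SOFC, the comparison is reduced to how many useful symbols each scheme's early phase delivers. OFC's build-up receives $k\ln 2$ symbols, of which $\tfrac12 k$ recover the first half and $\tfrac14 c_0\cdot\tfrac12 k=\tfrac18 kc_0$ persist as Build-Up edges, giving $\tfrac12 k+\tfrac18 kc_0$ useful in total; SOFC's systematic phase delivers exactly $(1-\epsilon)k$. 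Setting these equal gives $\epsilon_0=\tfrac12-\tfrac18 c_0$ directly, with no need to evaluate or approximate $S$ or $T_\epsilon$.

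Your route has a real gap beyond the one you already flag. The approximation $T_\epsilon \approx (1-2\epsilon)S$ assumes $1/P_M(i)$ is essentially flat on $[k/2,k-1]$, which it is not (cf.\ Fig.~\ref{pmax}), and you offer no error control. More seriously, even granting that approximation, your conclusion does not follow: you assert that the sign change of $S\bigl[1-2\epsilon-\tfrac{c_0}{4}\bigr] \ge k\bigl[1-\epsilon-\tfrac{c_0}{2}\bigr]$ occurs at $\epsilon_0$ because the $S$-coefficient vanishes there, but at $\epsilon=\epsilon_0$ the right-hand side equals $k(\tfrac12-\tfrac38 c_0)\approx -0.02k\neq 0$. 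The actual crossover of the approximate inequality therefore depends on the numerical value of $S$ and is not pinned at $\epsilon_0$; your ``residual linear-in-$\epsilon$ term then drives the sign change at $\epsilon_0$'' is not a valid deduction. The paper's counting argument is precisely what avoids leaving $S$ as an unresolved quantity; to salvage your approach you would need a much sharper handle on $S$ and $T_\epsilon$ than the uniform-profile approximation provides.
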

\begin{proof}
When $\epsilon\to 1$, compare (32) with (11) we can find that the full recovery overhead is the same since $k$ is negligible compared with $\frac{k\ln(2)}{1-\epsilon}+\frac{(1-\frac 14c_0)}{1-\epsilon} \cdot \sum_{i=\frac12k}^{s-1}{\frac{1}{P_M(i)}}$. When $0\leq\epsilon<1$, note that recovering $k$ source symbols requires $k$ \emph{useful} coded symbols, and the probability for generating \emph{useful} coded symbols in the completion phase is the same for both OFC and SOFC. Thus we evaluate the number of \emph{useful} coded symbols in the build-up phase for OFC and in the systematic phase for SOFC. As we mentioned before, the performance of OFC is the same as OFCNB with $\gamma_0\to 0$. For OFC, $k\ln(2)$ coded symbols are required in the build-up phase, $\frac12k$ of them are utilized for recovering the first $\frac12k$ source symbols, and $\frac14c_0\cdot\frac12k=\frac18kc_0$ of them are Build-Up edges that are helpful in the completion phase, and the rest symbols are discarded. Thus in total, there are $\frac12k+\frac18kc_0$ coded symbols \emph{useful}. For SOFC, only systematic coded symbols are ensured to be \emph{useful}. Thus only when more than $\frac12k+\frac18kc_0$ systematic coded symbols are received, we have $E(N_{sys}(k))>E(N_{OFC}(k))$. Therefore we have
\begin{align*}
(1-\epsilon_0)k&=\frac12k+\frac18kc_0\\
\epsilon_0&=\frac12-\frac18c_0.
\end{align*}
Thus Theorem 11 is obtained.
\end{proof}

When $0<\epsilon<\epsilon_0$, SOFC has better intermediate performance and lower full recovery overhead than OFC. When $\epsilon=\epsilon_0$, compared with OFC, SOFC has better intermediate performance and the same full recovery overhead. When $\epsilon_0<\epsilon<1$, SOFC has better intermediate performance but larger full recovery overhead than OFC. When $\epsilon\to 1$, SOFC has the same intermediate performance and full recovery overhead as OFC.
\section{Numerical results and discussion}
In this section, we present the performance of OFCNB and SOFC. We demonstrate that our proposed analyses match well with simulation results. Moreover, we show that OFCNB with $\gamma_0\to 0$ has better intermediate performance with the same full recovery overhead as OFC. We also show that SOFC has better intermediate performance and a lower full recovery overhead than OFC when the channel erasure rate is smaller than $\epsilon_0$. At last, we propose a method to further reduce the number of feedback transmissions without degrading performance.

\subsection{Validation of Proposed Analysis}
\begin{figure}[t]
\centering
\includegraphics[scale=0.45]{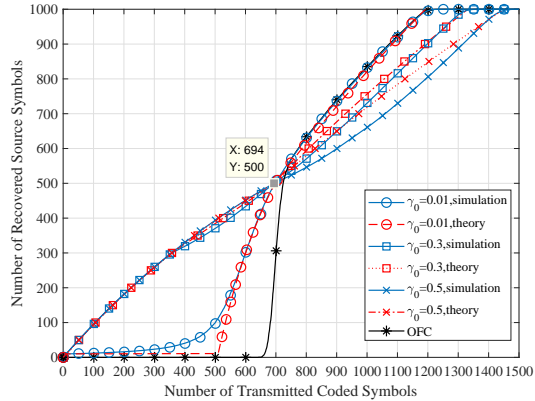}
\caption{The number of transmitted coded symbols versus the number of recovered source symbols for OFCNB.}
\label{OFCNB}
\end{figure}

First we show the theoretical results from Section III. Fig. \ref{OFCNB} shows the relationship between the number of transmitted coded symbols and the number of recovered source symbols for OFCNB when $k=1000$ and $\epsilon=0$. Theoretical and simulation results are presented for OFCNB with $\gamma_0=0.01$, $\gamma_0=0.3$ and $\gamma_0=0.5$, corresponding to three situations that $\gamma_0\to 0$, $0<\gamma_0<0.5$ and $0.5\leq\gamma_0\leq1$, respectively. Simulation results for OFC with $\beta_0=0.5$ is also presented. From Fig. \ref{OFCNB}, first we can see our analysis matches well with simulation results in all of the three situations. We can also find that OFCNB always requires $k\ln(2)\approx 694$ coded symbols for recovering $\frac12 k=500$ source symbols, as we demonstrated in Theorem 6. Compared with OFC, OFCNB with $\gamma_0=0.01$ has better intermediate performance and the same full recovery overheads, as we demonstrated in Corollary 1. Interestingly, when $\beta>0.5$, the curve of OFCNB with $\gamma_0=0.01$ coincides with the curve of OFC, which validates that the second phase of OFCNB completion phase is the same as the completion phase of OFC, as we mentioned in Theorem 3.
With larger $\gamma_0$, the intermediate performance of OFCNB becomes better but the full recovery overhead becomes larger, as we demonstrated in Corollary 2.

\begin{figure}[t]
\centering
\includegraphics[scale=0.45]{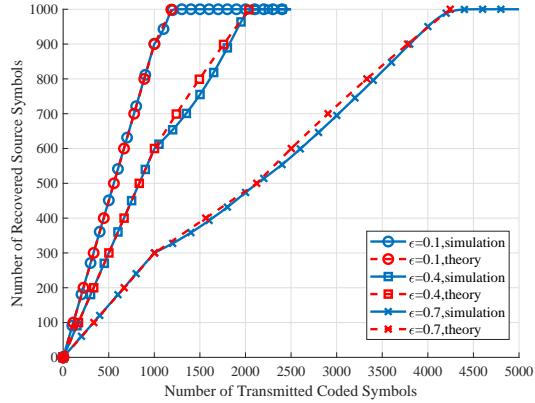}
\caption{The number of transmitted coded symbols versus the number of recovered source symbols for SOFC.}
\label{SOFC}
\end{figure}
\begin{figure}[t]
\centering
\includegraphics[scale=0.45]{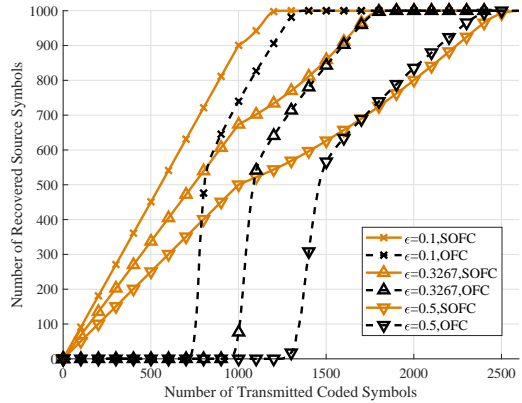}
\caption{The comparison of SOFC and OFC under different channel erasure rate.}
\label{SOFC_OFC}
\end{figure}
Then we present simulation results of SOFC matching with the analysis in Section IV. Fig. \ref{SOFC} shows the relationship between the number of transmitted coded symbols and the number of recovered source symbols of SOFC for $k=1000$. The channel erasure rate is set as $\epsilon=0.1$ and $0.4$ that correspond to $0\leq \epsilon \leq0.5$. We also set channel erasure rate as $\epsilon=0.7$ to correspond to $0.5< \epsilon<1$. The situation of $\epsilon\to 1$ requires infinite overhead for full recovery. Thus it is not demonstrated here. In Fig. \ref{SOFC}, the analysis for SOFC matches well with simulation results. It can be shown that the performance degrades with increasing $\epsilon$. In Fig. \ref{SOFC_OFC}, we compare the performance of SOFC and OFC under channel erasure rate $\epsilon=0.1,0.3267,0.5$. Note that $\epsilon_0=\frac12-\frac18 c_0\approx 0.3267$. As indicated in Theorem 11, when $\epsilon=0.1<\epsilon_0$, the intermediate performance of SOFC is substantially better than OFC, and SOFC also requires less overhead for full recovery. When $\epsilon=\epsilon_0=0.3267$, SOFC still has better intermediate performance than OFC. Yet after the systematic stage, the rate of recovery is not as fast as $\epsilon=0.1$, and the full recovery overhead of SOFC is the same as OFC. When $\epsilon=0.5>\epsilon_0$, the completion phase of SOFC becomes long such that it requires slightly larger overhead for full recovery compared with OFC, while the intermediate performance of SOFC is still much better.

\subsection{Performance Comparisons}
Fig. \ref{BER} presents the BER performance of OFCNB with $\gamma_0=0.01$ and SOFC under different channel erasure rate. The BER performance of OFC is also presented for comparison. We set $k=1000$ and $\epsilon=0.1,0.3$ and $0.5$, respectively. The overhead is defined as $N_t/k$, where $N_t$ is the number of transmitted coded symbols. From Fig. \ref{BER}, we can first see that in low BER regions, OFCNB has the same performance as OFC, while in high BER regions OFCNB has better BER performance than OFC, as has been demonstrated in Fig. \ref{SOFC}. Regarding the performance when $\epsilon=0.1$, SOFC has better BER performance than OFCNB and OFC. When $\epsilon=0.3$, the BER performance of SOFC is notablely better than OFCNB and OFC in high BER regions, and almost the same as OFCNB and OFC in low BER regions. When $\epsilon=0.5$, SOFC has worse BER performance than OFC and OFCNB. Moreover, error floor does not occur for both OFCNB and SOFC when BER is $10^{-6}$.

\begin{figure}[t]
\centering
\includegraphics[scale=0.5]{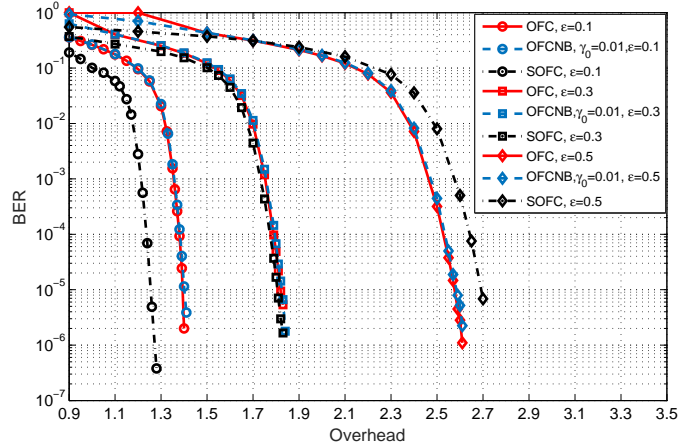}
\caption{BER performance of OFCNB, SOFC and OFC.}
\label{BER}
\end{figure}

In Fig. \ref{512} and Table \ref{feedback} we compare the performance of our proposed schemes with other online fountain coding schemes. We set $k=512$ and $\epsilon=0.1$. For the proposed schemes, the performance of OFCNB with $\gamma_0=0.01$ and SOFC is presented. For other online fountain coding schemes, we present OFC \cite{cassuto2015online}, IOFC \cite{Me2018IOFC}, the scheme in \cite{hashemi2016fountain} with quantized distance $s=50$, and the scheme in \cite{huang2017improved}. We also present the upper bound in Fig. \ref{512} corresponding to the situation where every coded symbol can recover one source symbol once it is received. Among these online fountain coding schemes, IOFC and the scheme in \cite{huang2017improved} improve the full recovery overhead with degraded intermediate performance compared with OFC. The scheme in \cite{hashemi2016fountain} with quantized distance $s=50$ improves the intermediate performance but requires larger overhead than OFC for full recovery. The proposed OFCNB with $\gamma_0=0.01$ also improves the intermediate performance, but it does not require larger overheads than OFC. The proposed SOFC has improvement on both the intermediate performance and the full recovery overhead compare to OFC. We can find that when $\epsilon_0=0.1$, SOFC has better intermediate performance than all other schemes, and for full recovery it requires the same overhead as the best scheme, i.e., the scheme in \cite{huang2017improved}. Moreover, SOFC almost achieves the upper bound when $\epsilon=0.1$.

\begin{figure}[t]
\centering
\includegraphics[scale=0.38]{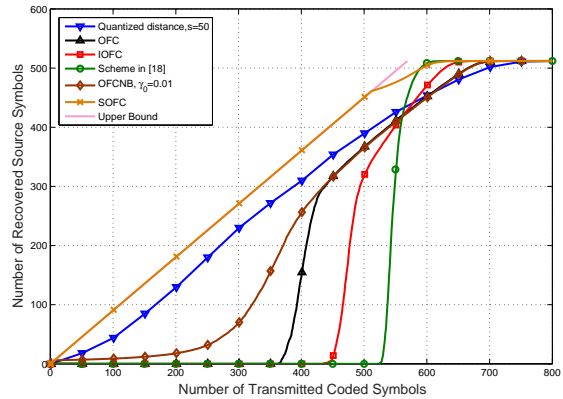}
\caption{Performance comparison of different online fountain coding schemes when $\epsilon=0.1$.}
\label{512}
\end{figure}

\begin{table}[t]
\renewcommand{\arraystretch}{0.6}
\caption{Overhead for full recovery and number of feedback transmissions when $k=512$ and $\epsilon=0.1$}
\label{s502}
\centering
\begin{tabular}{|@{\extracolsep{0.5em}}c @{\extracolsep{0.5em}}|@{\extracolsep{0.1em}}c |@{\extracolsep{0.1em}}c@{\extracolsep{0.1em}}|@{\extracolsep{0.1em}}c@{\extracolsep{0.1em}}|}
\hline
Scheme & Overhead  & Feedback($\beta=0.8$) & Feedback($\beta=1$)\\ \hline
$s=50$ in \cite{hashemi2016fountain}  &1.49 &12.08 &15.1   \\ \hline
OFC  &1.32 &5.33 &21.0   \\ \hline
IOFC  &1.24 &5.02 &16.9   \\ \hline
Scheme in \cite{huang2017improved} &1.18 &3.93 &9.5   \\ \hline
OFCNB($\gamma_0=0.01$) &1.32 &5.99 &29.9   \\ \hline
SOFC &1.18 &0 &22.6   \\ \hline
\end{tabular}
\label{feedback}
\end{table}

Now we compare the number of feedback transmissions required for these schemes, for both $\beta=0.8$ and full recovery, i.e. $\beta=1$. We can find that SOFC requires no feedback when $\beta=0.8$, since the completion phase of SOFC starts when $\beta=0.9$. For full recovery, OFCNB with $\gamma_0=0.01$ requires more feedback transmissions than other schemes, and SOFC requires a number of feedback transmissions comparable to OFC. This is because the optimal degree changes frequently with increasing $\beta$ after $\beta>0.8$, as we can observe from Fig. \ref{pmax}. For OFC, there are several small components in the decoding graph. Thus once the small components turn black, $\beta$ increase with a speed greater than $1/k$. On the other hand, for OFCNB with $\gamma_0=0.01$, there are fewer small components in the decoding graph. Thus the increment of $\beta$ is smaller upon reception of a coded symbol, and resulting in more feedback transmissions than OFC for full recovery. For SOFC, there are no small components in the decoding graph. Thus it also requires more feedback transmissions than OFC for full recovery. However, the completion phase of SOFC is shorter, and it requires fewer feedback transmissions than OFCNB with $\gamma_0=0.01$ for $\beta$ to increase to 1.

%Now we compare the number of feedback transmissions required for these schemes. OFCNB with $\gamma_0=0.01$ requires more feedback transmissions for full recovery than other schemes, and SOFC requires a comparable number of feedback transmissions to OFC. OFCNB has a longer completion phase than OFC, thus it requires more feedback transmissions. For SOFC, the completion phase begins when $\beta=0.9$. At this time, the optimal degree changes frequently with the increase of $\beta$, and the increase of $\beta$ is small at each time. Thus SOFC also requires more feedback transmissions. We also present the number of feedback transmissions required when $\beta=0.8$. {\color{red}We can find that SOFC requires the least feedback when $\beta=0.8$ since the completion phase of SOFC is not started.}

\subsection{Further Reduction on Feedback Transmissions}
Note that when $\beta$ is large, although the optimal degree $\hat{m}$ changes frequently with increasing $\beta$, the change of $P_M(\beta)$, i.e., the probability that a coded symbol belongs to Case-1 or Case-2, is negligible and thus the updating of $\hat{m}$ does not contribute significantly to full recovery. Therefore, we propose to update $\hat{m}$ less frequently in order to reduce the number of feedback transmissions. Denote the previously and newly calculated optimal degree as $m_{o}$ and $m_n$, respectively, we propose to set a small threshold $\Delta p$ and only update the coded symbols degree when
\begin{equation}
P_1(m_{n},\beta)+P_2(m_{n},\beta)>P_1(m_{o},\beta)+P_2(m_{o},\beta)+\Delta p.
\end{equation}
We denote the proposed scheme as the threshold-based feedback scheme and evaluate its performance.

In Fig. \ref{th} and Table \ref{thresholdt}, we present the intermediate performance, the full recovery overhead and feedback transmissions required for OFC, OFCNB and SOFC using the threshold-based feedback scheme, with $\Delta p=0.01$, $k=512$ and $\epsilon=0.1$. It is demonstrated in Fig, \ref{th} that the threshold-based feedback scheme has the same intermediate performance and full recovery overhead compared with conventional non-threshold-based scheme. Comparing Table \ref{thresholdt} with Table \ref{feedback}, we can find that threshold-based feedback scheme can significantly reduce the number of feedback transmissions required. In addition, since feedback for SOFC occurs only after $\beta>0.9$ when a newly calculated $\hat{m}$ has negligible impact on $P_M(\beta)$, the reduction of feedback for SOFC is the most significant and it requires less feedback transmissions than OFC with the threshold-based feedback scheme.

\begin{figure}[t]
\centering
\includegraphics[scale=0.45]{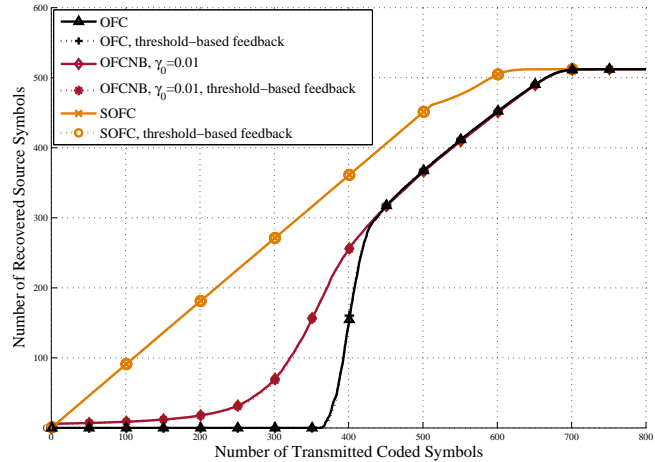}
\caption{Performance of OFC, OFCNB and SOFC with threshold-based feedback scheme when $\Delta p=0.01$ and $\epsilon=0.1$.}
\label{th}
\end{figure}

\begin{table}[t]
\renewcommand{\arraystretch}{0.6}
\caption{Full recovery overhead and number of feedback transmissions for threshold-based feedback scheme when $\Delta p=0.01$, $k=512$ and $\epsilon=0.1$}
\label{s502}
\centering
\begin{tabular}{|@{\extracolsep{0.5em}}c@{\extracolsep{0.5em}}|@{\extracolsep{0.1em}}c@{\extracolsep{0.1em}}|@{\extracolsep{0.1em}}c@{\extracolsep{0.1em}}|}
\hline
Scheme & Overhead  &  Feedback($\beta=1$)\\ \hline
OFC(threshold-based feedback)  &1.32  &14.3   \\ \hline
OFCNB($\gamma_0=0.01$, threshold-based feedback) &1.32 &17.1    \\ \hline
SOFC(threshold-based feedback) &1.18 &9.69   \\ \hline
\end{tabular}
\label{thresholdt}
\end{table}

\section{Conclusions}

We proposed OFCNB to improve the intermediate performance of online fountain codes. By varying the parameter $\gamma_0$, the number of required degree-1 coded symbols can be adjusted, and OFCNB can achieve a trade-off between the intermediate performance and the full recovery overhead. The relationship between the number of transmitted coded symbols and the number of recovered source symbols was analyzed for OFCNB with different $\gamma_0$. We showed that OFCNB with $\gamma_0\to 0$ has better intermediate performance and requires the same full recovery overhead compared with OFC. Motivated by the analysis for OFCNB, we considered the source of the afromentioned trade-off to be the random selection of source symbols, and made OFCNB systematic as SOFC to improve both the intermediate performance and the full recovery overhead. Theoretical analysis for SOFC was then proposed. We showed that SOFC has better intermediate performance for all channel erasure rates, and requires lower full recovery overhead when channel erasure rate is lower than a constant $\epsilon_0$. The proposed analyses were verified by simulation results, and the performance improvement of OFCNB and SOFC was observed in comparison with other online fountain coding schemes.

\balance
\bibliographystyle{IEEEtran}%
\bibliography{../bib/IEEEabrv,../bib/bibfile}

\end{document}